\newcommand{\twidth}{6in}
\renewcommand{\AA}{{\mathcal{A}}}
\newcommand{\BB}{{\mathcal{B}}}
\newcommand{\CC}{{\mathcal{C}}}
\newcommand{\DD}{{\mathcal{D}}}
\newcommand{\EE}{{\mathcal{E}}}
\newcommand{\GG}{{\mathcal{G}}}
\newcommand{\MM}{{\mathcal{M}}}
\newcommand{\NN}{{\mathcal{N}}}
\newcommand{\QQ}{{\mathcal{Q}}}
\newcommand{\UU}{{\mathcal{U}}}
\newcommand{\su}{{\mathfrak{su}}}
\newcommand{\SU}{{\mathrm{SU}}}
\newcommand{\U}{{\mathrm{U}}}
\newcommand{\R}{{\mathbb{R}}}
\newcommand{\Z}{{\mathbb{Z}}}
\newcommand{\N}{{\mathbb{N}}}
\newcommand{\C}{{\mathbb{C}}}
\newcommand{\beq}{\begin{equation}}
\newcommand{\eeq}{\end{equation}}
\newcommand{\bea}{\begin{eqnarray}}
\newcommand{\eea}{\end{eqnarray}}
\newcommand{\bal}{\begin{align}}
\newcommand{\eal}{\end{align}}
\newcommand{\bml}{\begin{multline}}
\newcommand{\eml}{\end{multline}}
\newcommand{\bdy}{\partial}
\newcommand{\wt}{\widetilde}
\newcommand{\lto}{\longrightarrow}
\def \d{\mathrm{d}}
\newcommand{\ol}{\overline}
\newcommand{\tr}{{\rm tr}\, }
\newcommand{\ip}[1]{\langle#1\rangle}
\newcommand{\norm}[1]{\big\|#1\big\|}
\newcommand\xleftrightarrow[2][]{%
  \ext@arrow 9999{\longleftrightarrowfill@}{#1}{#2}}
\newcommand\longleftrightarrowfill@{%
  \arrowfill@\leftarrow\relbar\rightarrow}
\newcommand{\bigslant}[2]{{\raisebox{.2em}{$#1$}\left/\raisebox{-.2em}{$#2$}\right.}}
\newtheorem{theorem}{Theorem}
\newtheorem{proposition}[theorem]{Proposition}
\definecolor{TW-color}{RGB}{100,0,100}
\numberwithin{equation}{section}
\begin{document}
\renewcommand*{\thefootnote}{\fnsymbol{footnote}}
\begin{titlepage}
\begin{center}
{\huge
A model for gauged skyrmions\\
with low binding energies \par}


\vspace{10mm}
{\Large Josh Cork\footnote{Email address: \texttt{joshua.cork@itp.uni-hannover.de}}$^{\rm 1}$,\ Derek Harland\footnote{Email address: \texttt{d.g.harland@leeds.ac.uk}}$^{\rm 2}$,\ and Thomas Winyard\footnote{Email address: \texttt{t.winyard@kent.ac.uk}}$^{\rm 2,3}$}\\[10mm]

\noindent {\em ${}^{\rm 1}$ Institut f\"ur Theoretische Physik \,{\rm and}
Riemann Center for Geometry and Physics\\
Leibniz Universit\"at Hannover, Appelstra\ss{}e 2, 30167 Hannover, Germany
}\\
\smallskip
\noindent {\em ${}^{\rm 2}$ School of Mathematics, University of Leeds\\ Woodhouse Lane, Leeds, United Kingdom }\\
\smallskip
\noindent {\em ${}^{\rm 3}$ School of Mathematics, Statistics, and Actuarial Science\\
University of Kent, Canterbury, United Kingdom}\\[10mm]
{\Large \today}
\vspace{15mm}
\begin{abstract}
    We consider gauged skyrmions with boundary conditions which break the gauge from $\SU(2)$ to $\U(1)$ in models derived from Yang--Mills theory. After deriving general topological energy bounds, we approximate charge $1$ energy minimisers using KvBLL calorons with non-trivial asymptotic holonomy, use them to calibrate the model to optimise the ratio of energy to lower bound, and compare them with solutions to full numerical simulation. Skyrmions from calorons with non-trivial asymptotic holonomy exhibit a non-zero magnetic dipole moment, which we calculate explicitly, and compare with experimental values for the proton and the neutron. We thus propose a way to develop a physically realistic Skyrme--Maxwell theory, with the potential for exhibiting low binding energies.
\end{abstract}
\end{center}
\end{titlepage}
\renewcommand*{\thefootnote}{\arabic{footnote}}
\setcounter{footnote}{0}
\hypersetup{
    linkcolor = {blue}
}
\section{Introduction}
The Skyrme model \cite{skyrme1962nucl} is a low-energy effective field theory for quantum chromodynamics (QCD). It was first proposed nearly sixty years ago as a nonlinear field theory for pions, however it only gained recognition nearly twenty years after its conception, when Witten demonstrated that baryons in the large $N$ limit of QCD \cite{witten1979baryons} may be identified as the topological soliton \cite{MantonSutcliffe2004,shnir2018topsolitons} solutions to the static field equations of the Skyrme model. These topological solitons are called \textit{skyrmions}. As an effective theory, the Skyrme model provides a tractable approach to studying nuclei, in comparison to trying to do so directly via QCD, moreover at both classical and quantum levels, the Skyrme model successfully predicts many observed properties of real nuclei with relative accuracy \cite{rhozahed2016multifaceted}.

The field content of the Skyrme model is an $\SU(2)$-valued map in $3+1$ dimensions called the Skyrme field, which encodes the pion degrees of freedom. It is possible to couple the Skyrme field to a gauge field, in which case one obtains the ingredients for a gauged Skyrme model. The traditional way to do this is to replace ordinary derivatives by covariant derivatives, and to include a Maxwell term in the Skyrme lagrangian. Gauged Skyrme models of this type have been proposed before for a variety of purposes: a $\U(1)$ model \cite{CallanWitten1984monopole} was introduced in order to study interactions between nucleons and monopoles, and an $\SU(2)_L$ model \cite{dHokerFarhi1984decoupling} was considered to study weak interactions in the Skyrme model (see also \cite{CriadoKhozeSpannowsky2020emergence}). Examples of $3$D static solutions to the $\U(1)$ model \cite{PietteTchrakian2000static,RaduTchrakian2006spinning} and $\SU(2)$ models \cite{ArthurTchrakian1996GaugedSky,BrihayeHartmannTchhrakian2001MonopolesGaugeSky} have been constructed numerically, although significantly less is known in general about gauged skyrmions when compared to their ungauged counterparts.

Both the ordinary and gauged Skyrme models possess topological invariants, often called the topological charge. In the ordinary model the charge is an integer, specifically the degree of the Skyrme field, and it is physically identified as the number of baryons. In addition, both theories exhibit topological energy bounds. Topological energy bounds, also known as Bogomol'nyi bounds, are ubiquitous in solitonic theories. These are lower bounds on the energy $E$ of a soliton, which are typically of the form
\begin{align}\label{topological-energy-bound-general}
E \geq C |\QQ|,
\end{align}
where $\QQ$ is the topological charge, and $C>0$ is a constant. Very often one is interested to know how close a soliton comes to saturating the bound, and a natural dimensionless measure of this is the quantity
\begin{align}\label{BMI}
E_{\text{BMI}}=\frac{E}{C|\QQ|}.
\end{align}
We will call this the \textit{Bogomol'nyi mass index} (or BMI for short).  Solitons with a BMI close to $1$ are close to saturating their topological energy bound, and solitons with a BMI of exactly $1$ are called BPS. Important examples of BPS solitons are vortices in gauged sigma models, Yang--Mills--Higgs monopoles, and Yang--Mills instantons.

Topological energy bounds and the Bogomol'nyi mass index are important in models of nuclei because they yield insight into nuclear binding energies.  The binding energy per nucleon of a skyrmion with topological charge $\QQ\in\N$ is defined to be $(\QQ E_1 - E_{\QQ})/\QQ$, where $E_{\QQ}$ and $E_1$ are the energies of the $\QQ$-skyrmion and the 1-skyrmion.  In nuclear physics binding energies per nucleon are around 1\% of the mass of a nucleon.  The energy bound \eqref{topological-energy-bound-general} implies that binding energies per nucleon are bounded above by $E_1-C$, and in particular, the binding energy as a percentage of the mass of a nucleon is bounded above by $E_{\mathrm{BMI},1}-1$, where $E_{\mathrm{BMI},1}$ is the BMI of the $1$-skyrmion. Therefore if the charge 1 skyrmion in some variant of the Skyrme model has a BMI a little above 1.01, the classical binding energies in that model are likely to be comparable with experimental values.

Due to geometrical constraints \cite{Manton1987geometry}, the standard Skyrme model does not admit BPS solutions, so skyrmions always exhibit BMIs strictly greater than $1$. Unfortunately, they are significantly greater than $1$: for example, the 1-skyrmion has a BMI of approximately $1.232$ \cite{MantonSutcliffe2004}.  As a result, binding energies in the standard Skyrme model can be very large (as much as 10\% of the mass of a nucleon \cite{battye2001solitonic,battye2002skyrmions}).  In gauged Skyrme models the situation can be even worse.  In the $\U(1)$-gauged Skyrme model studied in \cite{PietteTchrakian2000static} the 1-skyrmion at strong coupling has a BMI of approximately $1.820$, so the binding energy problem is likely exacerbated in this model.

There are currently a growing number of proposed ways to alter the standard Skyrme model to lower binding energies \cite{AdamSanchez-GuillenWereszczynski2010sexticmodel,sutcliffe2011skyrmionsTruncated,harland2014topological,GillardHarlandSpeight2015skyrmions,GillardHarlandKirkMaybeeSpeight2017,GudnasonSpeight2020realistic,AdamOlesWereszczynski2020dielectric,gudnason2020dielectric}. Perhaps one of the most extreme of these is Sutcliffe's holographic model \cite{sutcliffe2011skyrmionsTruncated}. The full model is a BPS theory which arises via an expansion of a Yang--Mills gauge field on $\R^4$ \cite{sutcliffe2010skyrmions} in terms of its holonomy and infinitely many vector mesons, and is the flat-space analogue of the holographic model of Sakai and Sugimoto \cite{SakaiSugimoto2005}. By truncating this expansion to only include the first vector meson -- identified as the $\rho$-meson -- one obtains a more manageable model, which has proven to be somewhat successful in reducing binding energies \cite{NayaSutcliffe2018skyrmions}, and appears to resolve other problems in the Skyrme model such as how it exhibits alpha-particle subclusters \cite{nayaSutcliffe2018skyrmionscluster}. On top of its phenomenological successes, a beautiful advantage of Sutcliffe's model is how it explains the accuracy of the Atiyah--Manton approximation \cite{AtiyahManton1989} of Skyrme fields from Yang--Mills instantons; the holonomy of instantons on $\R^4$ well-approximate skyrmions, and Sutcliffe's model provides a solid theoretical framework for controlling how good the approximation is. This relationship with instantons also allows for a consistent configuration space on which to study low-energy interactions of skyrmions \cite{AtiyahManton1993geometry,HalcrowWinyard2021consistent}.

As demonstrated by one of us, the Atiyah--Manton--Sutcliffe framework can be extended to gauged skyrmions \cite{cork2018skyrmions}, specifically, providing a formulation of an $\SU(2)$ gauged Skyrme model on $\R^3$, where gauged skyrmions are seen to be well-approximated by self-dual Yang--Mills fields on $S^1\times\R^3$, namely calorons. Even after discarding all vector mesons, the gauged Skyrme model which arises in this way is more general than the more traditional gauged Skyrme models, with the addition of interaction terms between the Skyrme and gauge field. An advantage of these more general models is that there is additional freedom in the choice of parameters, and one aim of this paper is to demonstrate how to fine-tune these parameters so as to optimise the BMI of gauged skyrmions.

Following on from the initial work in \cite{cork2018skyrmions} for spherically-symmetric configurations, in this paper we consider charge $|\QQ|=1$ gauged skyrmions derived from calorons with non-trivial holonomy \cite{KraanVanBaal1998,LeeLu1998} which posses axial symmetry. The unique feature that we are interested in is that their boundary conditions break the gauge symmetry from $\SU(2)$ to $\U(1)$, and this passes over to the associated gauged skyrmions. So whilst our model is generically $\SU(2)$, the existence of this boundary condition means such skyrmions can be used as a toy model for studying $\U(1)$ gauged skyrmions.

This paper is organised as follows. In section \ref{sec:gauged-skyrme-models} we introduce a general prescription for an $\SU(2)$ gauged Skyrme model, alongside a review of its construction from Yang--Mills theory, and its associated topological charge. In section \ref{sec:energy-bounds} we derive topological lower bounds for the gauged Skyrme energy. Two distinct bounds are considered: the first is a general bound derived for couplings defined in the interior of a domain $\DD\subset\R^6$, and the second of these considers behaviour on the boundary of $\DD$. A comparison to the Yang--Mills topological bound is made. Section \ref{sec:opt-BMI} is dedicated to computing the energies of gauged Skyrme fields derived from calorons with non-trivial holonomy, and comparing to numerical minimisers, with an emphasis on optimising the BMI with respect to the bounds from section \ref{sec:energy-bounds}. In particular, we show that the model with optimal BMI for these configurations must be defined by couplings on the boundary of $\DD$. Finally in section \ref{sec:dipole}, to further emphasise the relationship to $\U(1)$ gauged skyrmions, we exploit the fact that calorons with non-trivial holonomy may be viewed as consisting of two constituent monopoles \cite{kraanVanBaal1998monopoleconsts}, and we compute the associated magnetic dipole moment. Since our work relies heavily on a relatively detailed understanding of calorons with non-trivial holonomy, we provide a review of the charge $1$ examples which we consider in appendix \ref{appendix:KvBLL}.
\section{Gauged Skyrme models}\label{sec:gauged-skyrme-models}

\subsection{The energy functional}
Let $U:\R^3\lto \SU(2)$ be a smooth function, and $A$ be an $\SU(2)$ gauge field. $A$ defines a covariant derivative $D^A$, which acts on $U$ via $D^AU=\d U+[A,U]$. This defines in turn the $\su(2)$-valued currents $L^A=U^{-1}D^AU$ and $R^A=D^AUU^{-1}$, and the non-abelian curvature $F^A=\d A+A\wedge A$. We consider the following energy functional for static gauged Skyrme fields:
\begin{multline}
    E[U,A]=\chi_0\norm{L^A}^2+\chi_1\norm{L^A\wedge L^A}^2+\chi_2\norm{F^A}^2\\-\chi_3\ip{F^A,U^{-1}F^AU}-\ip{\chi_4F^A+\chi_5U^{-1}F^AU,L^A\wedge L^A}.\label{Gauged-Skyrme-energy}
\end{multline}
Here $\chi_p$ are in general arbitrary real parameters, which we shall constrain later, and the inner product $\ip{\cdot,\cdot}$ and norm $\norm{\cdot}$ are defined on $p$-forms $\xi,\eta\in\Omega^p(\R^3,\su(2))$ as
\begin{align}
    \ip{\xi,\eta}\equiv\ip{\xi,\eta}_{L^2}=\int_{\R^3}\tr\left(\xi\wedge\star\eta^\dagger\right),\quad \norm{\xi}^2=\int_{\R^3}\tr\left(\xi\wedge\star\xi^\dagger\right),\label{ip-forms}
\end{align}
where $\star:\Lambda^p\lto\Lambda^{3-p}$ is the Hodge isomorphism, and $\eta^\dagger$ denotes the $p$-form with components $\eta_{a_1\cdots a_p}^\dagger$.

The energy \eqref{Gauged-Skyrme-energy} is invariant under gauge transformations $g:\R^3\lto \SU(2)$, which act on the field configurations $(U,A)$ via
\begin{align}\label{gauge-transformations}
    U\mapsto g^{-1}Ug,\quad A\mapsto g^{-1}Ag+g^{-1}\d g.
\end{align}
It is also invariant under Euclidean transformations of $\R^3$, including parity-reversing transformations.  If $\chi_4=\chi_5$ it becomes invariant under the discrete symmetry
\begin{align}\label{discrete-symmetry}
(U,A)\mapsto(U^T,-A^T),
\end{align}
which is related to charge conjugation.  The static energy \eqref{Gauged-Skyrme-energy} can easily be extended to a Lorentz-covariant action, although the focus of the present article is on static solutions only.  For applications in physics, one is ultimately interested in a Skyrme model coupled to a $\U(1)$ gauged field.  Although the energy \eqref{Gauged-Skyrme-energy} involves instead an $\SU(2)$ gauge field, we will show later that the gauge group can be broken to $\U(1)$ by a suitable choice of boundary condition.

Several earlier papers on gauged skyrmions considered only a restricted model in which $\chi_3=\chi_4=\chi_5=0$ \cite{PietteTchrakian2000static,ArthurTchrakian1996GaugedSky, BrihayeHartmannTchhrakian2001MonopolesGaugeSky,brihayeTchhrakian1998solitons}.  However, the more general model \eqref{Gauged-Skyrme-energy} seems natural in the sense that it is consistent with expected symmetries.  Moreover, as will be shown below, models of the form \eqref{Gauged-Skyrme-energy} arise naturally from an holographic construction.

It will prove convenient to rewrite the energy \eqref{Gauged-Skyrme-energy} slightly.  Let $F_\pm^{U,A} = F^A \pm U^{-1}F^A U$ and let
\begin{align}\label{chi-to-x-couplings}
\begin{aligned}
x_1 &= \chi_0, & x_3 &= \chi_4+\chi_5, & x_5 &= \chi_2+\chi_3, \\
x_2 &= \chi_1, & x_4 &= \chi_4-\chi_5, & x_6 &= \chi_2-\chi_3.
\end{aligned}
\end{align}
With these parameters, the energy \eqref{Gauged-Skyrme-energy} may be written as
\begin{multline}
    E[U,A]=x_1\norm{L^A}^2+x_2\norm{L^A\wedge L^A}^2+\dfrac{x_5}{4}\norm{F_-^{U,A}}^2+\dfrac{x_6}{4}\norm{F_+^{U,A}}^2\\-\dfrac{1}{2}\ip{x_3F_+^{U,A}+x_4F_-^{U,A},L^A\wedge L^A},\label{Gauged-Skyrme-energy-x-vars}
\end{multline}
and the additional discrete symmetry \eqref{discrete-symmetry} arises when $x_4=0$.  This choice of parameters simplifies many of the calculations that follow.

It is sometimes useful to rewrite the energy \eqref{Gauged-Skyrme-energy-x-vars} in terms of a 3-sphere valued map.  We write
\begin{align}
    U=\phi^0\mathbb{1}+{\rm i}\vec{\phi}\cdot\vec{\sigma},
\end{align}
with $\phi=(\phi^0,\vec{\phi})\equiv(\phi^0,\phi^1,\phi^2,\phi^3)$ a unit four-vector and $\vec{\sigma}$ the vector of Pauli matrices.  Additionally, we may think of the gauge field $A$ as a three-vector $(A^1,A^2,A^3)$ of one-forms, namely $A={\rm i} A_i^a\sigma^a\;\d X_i$.\footnote{Here, and throughout, we denote a point in $\R^3$ by $\vec{X}=(X_1,X_2,X_3)$ so as to distinguish it from the couplings $x_p$ \eqref{chi-to-x-couplings} for the energy \eqref{Gauged-Skyrme-energy-x-vars}.} The covariant derivative then acts on the components $\phi^\mu$ as
\begin{align}
    D_i\phi^0=\bdy_i\phi^0,\quad D_i\phi^a=\bdy_i\phi^a-2\epsilon_{abc}A_i^b\phi^c.
\end{align}
The curvature may also be written as $F^A=\frac{{\rm i}}{2}F_{ij}^a\sigma^a\;\d X_i\wedge\d X_j$, where
\begin{align}
    F_{ij}^a=\bdy_iA_j^a-\bdy_jA_i^a-2\epsilon_{abc}A_i^bA_j^c.
\end{align}
Summation over repeated indices is assumed here. In this way, the energy \eqref{Gauged-Skyrme-energy-x-vars} may be written as
\begin{align}
    E=\int_{\R^3}\EE\;\d^3x,
    \label{expanded-energy}
\end{align}
with the energy density $\EE$ in tensor components given by
\begin{multline}
    \EE=2x_1\big|D_i\phi^\mu\big|^2+4x_2\big|D_{[i}\phi^\mu D_{j]}\phi^\nu\big|^2+x_6\big|F_{ij}^a\big|^2+2(x_5-x_6)\big|\phi^{[a}F_{ij}^{b]}\big|^2\\
    +4x_4F_{ij}^aD_{[i}\phi^0D_{j]}\phi^a+2x_3\epsilon_{abc}F_{ij}^aD_{[i}\phi^bD_{j]}\phi^c,
\end{multline}
where here Latin indices sum over $1,2,3$, Greek indices sum over $0,1,2,3$, and square brackets indicate antisymmetrised indices (e.g.\ $\phi^{[a}F_{ij}^{b]}=\frac12(\phi^{a}F_{ij}^{b}-\phi^{b}F_{ij}^{a})$).
\subsection{Gauged model from Yang--Mills theory}\label{subsec:sky-from-YM}
It was shown in \cite{cork2018skyrmions} that gauged Skyrme models of the form \eqref{Gauged-Skyrme-energy} can be obtained by taking a mode expansion of Yang--Mills theory on $S^1\times\R^3$.  We now review this construction.

From an $\SU(2)$ gauge field $\wt{A}$ on $\R\times\R^3$ with period $\beta$, that is, satisfying
\begin{align}
\wt{A}_\mu(t-\beta/2,\vec{X})=\wt{A}_\mu(t+\beta/2,\vec{X}),\quad\text{for all }t\in\R,
\end{align}
we may produce a Skyrme field $U:\R^3\lto \SU(2)$ from the holonomy around the periodic direction, namely, by solving for $H:[-\beta/2,\beta/2]\times\R^3\lto \SU(2)$:
\begin{align}\label{holonomy-cal-sky}
    \bdy_tH+\wt{A}_tH=0,\quad H(-\beta/2,\vec{X})=\mathbb{1},
\end{align}
and setting $U(\vec{X})=H(\beta/2,\vec{X})$. We also obtain a gauge field $A$ on $\R^3$ by setting
\begin{align}
A_j(\vec{X})=\wt{A}_j(-\beta/2,\vec{X})\quad\text{for }j=1,2,3.
\end{align}
Gauge-transforming with $H^{-1}$, one obtains a gauge where $\wt{A}_t=0$, which breaks the periodicity, and in this gauge, one can expand the fields $\wt{A}_j$ in the $t$-direction in terms of a complete, orthonormal basis of functions on $L^2[-\beta/2,\beta/2]$. By truncating this expansion to retain only leading terms, one is left with a gauge field of the form
\begin{align}
    \wt{A}_j(t,\vec{X})=\varphi_+(t)U(\vec{X})^{-1}D^A_jU(\vec{X})+A_j(\vec{X}),\label{expanded-gf}
\end{align}
where $\varphi_+$ is a `kink function' satisfying $\varphi_+(-\beta/2)=0$ and $\varphi_+(\beta/2)=1$. This may be defined, for example, by a normalised integral of one of the basis elements of $L^2[-\beta/2,\beta/2]$. Substituting \eqref{expanded-gf} into the Yang--Mills energy $\int_{S^1\times\mathbb{R}^3}\tr(F^{\wt{A}}\wedge\star(F^{\wt{A}})^\dagger)$ we obtain the static energy for a gauged Skyrme model. It may be described by \eqref{Gauged-Skyrme-energy-x-vars} with coefficients
\begin{align}\label{YM-couplings-ints}
\begin{aligned}
x_1&=\int_{-\frac{\beta}{2}}^{\frac{\beta}{2}}\left(\dfrac{\d\varphi_+}{\d t}\right)^2\;\d t, & 
x_2&=\int_{-\frac{\beta}{2}}^{\frac{\beta}{2}}\left(1-\varphi_+\right)^2\varphi_+^2\;\d t, \\
x_3&=2\int_{-\frac{\beta}{2}}^{\frac{\beta}{2}}\varphi_+\left(1-\varphi_+\right) \;\d t, & 
x_4&=2\int_{-\frac{\beta}{2}}^{\frac{\beta}{2}}\varphi_+\left(1-\varphi_+\right)\left(1-2\varphi_+\right) \;\d t, \\
x_5&=\int_{-\frac{\beta}{2}}^{\frac{\beta}{2}}\left(1-2\varphi_+\right)^2 \;\d t, & 
x_6&=\beta. 
\end{aligned}
\end{align}
Note that for any choice of the kink function $\varphi_+\in L^2[-\beta/2,\beta/2]$, we can rescale and consider instead the function $\wt{\varphi}_+\in L^2[-1/2,1/2]$ given by $\wt{\varphi}_+(t)=\varphi_+(t\beta)$. This rescales the coefficients \eqref{YM-couplings-ints} via $x_\mu\mapsto\wt{x}_\mu$ with $\wt{x}_1=\beta x_1$, and $\wt{x}_\mu=x_\mu/\beta$ for $\mu=2,\dots,6$. This is equivalent to a rescaling of length units, and thus we may always (when it is convenient to do so) set the period as $\beta=1$.

At first glance, this is a six-parameter family of models, but in reality there are a maximum of five independent parameters defined by \eqref{YM-couplings-ints}, meaning the model from Yang--Mills theory is a restricted case of the more general model \eqref{Gauged-Skyrme-energy-x-vars}. Indeed, note that the parameters \eqref{YM-couplings-ints} satisfy the relationship
\begin{align}\label{x-relationships-YM}
x_5+2x_3 = \beta = x_6
\end{align}
so that one of the variables ($x_5$, say) is determined by the other five. Furthermore, the remaining degrees of freedom in \eqref{YM-couplings-ints} are \textit{generically independent}, by which we mean that their integrands are linearly independent polynomials in $\varphi_+$ and $\varphi_+'$.

In principle, any function $\varphi_+$ with the correct boundary conditions could be chosen to define the couplings via \eqref{YM-couplings-ints}. For most of this paper, we follow \cite{cork2018skyrmions} and make the choice $\varphi_+=\phi_+^\alpha$ which arises from the orthonormal basis for $L^2[-\beta/2,\beta/2]$ given by the \textit{ultraspherical functions}\footnote{also known as the \textit{Gegenbauer} functions.}.  Explicitly,
\begin{align}
        \phi_+^\alpha(t)&=\bigslant{\int_{-\frac{\beta}{2}}^t\phi_0^\alpha(s)\;\d s}{\int_{-\frac{\beta}{2}}^{\frac{\beta}{2}}\phi_0^\alpha(s)\;\d s}\notag\\
       &=\dfrac{1}{2}+\dfrac{t}{2}\dfrac{4^{\alpha+1}\Gamma(\alpha+1)^2}{\Gamma(2\alpha+2)}\;{}_2F_1\left(\dfrac{1}{2},-\alpha;\dfrac{3}{2};\left(\frac{2t}{\beta}\right)^2\right),\label{hypergeometric-fct}
\end{align}
where
\begin{align}
       \phi_0^\alpha(t)=\left(1-\left(\dfrac{2t}{\beta}\right)^2\right)^\alpha,
\end{align}
$\alpha>-\frac{1}{2}$ is a real parameter, $\Gamma$ is the usual `gamma' function, and ${}_2F_1$ is Gauss' hypergeometric function.\footnote{See \cite{AbramowitzStegun1964} for details about these special functions. Our notation for the ultraspherical parameter differs slightly by replacing $\alpha$ by $2\alpha+\frac{1}{2}$, in agreement with \cite{cork2018skyrmions}.} For all $\alpha$, \eqref{hypergeometric-fct} satisfies the parity condition
\begin{align}
    \phi_+^\alpha(t)+\phi_+^\alpha(-t)=1,
\end{align}
which forces the integrand of $x_4$ in \eqref{YM-couplings-ints} to be odd, so that $x_4=0$, and hence the energy generated by these functions has the symmetry \eqref{discrete-symmetry}. In particular, the family \eqref{hypergeometric-fct} includes the simplest possible function satisfying the given boundary conditions, namely
\begin{align}\label{alpha=0-simple}
    \phi_+^0(t)=\dfrac{1}{2}+\dfrac{t}{\beta}.
\end{align}
The main motivation for choosing this family is due to how they generalise Sutcliffe's model \cite{sutcliffe2011skyrmionsTruncated,sutcliffe2010skyrmions}. In the weak-coupling limit where $\alpha,\beta\to\infty$ and $\beta^2/\alpha\to8$, the ultraspherical functions approach the Hermite functions, and in particular,
\begin{align}
    \phi_+^\alpha(t)\to\psi_+(t)=\dfrac{1}{2}\left(1+\mathrm{erf}\left(\frac{t}{\sqrt{2}}\right)\right),
\end{align}
which is precisely the kink function on $\R$ considered in \cite{sutcliffe2010skyrmions}.

The function $\phi_+^\alpha$ in \eqref{hypergeometric-fct} has a more transparent representation in the case where $\alpha\in\Z^+$, which we now derive. First, for brevity, let $N=\int_{-\frac{\beta}{2}}^{\frac{\beta}{2}}\phi_0^\alpha(t)\;\d t$. Then, integrating by parts $\alpha$ times gives
\begin{align*}
\phi_+^\alpha(t)
&=\frac{1}{N}\frac{(\alpha!)^2}{(2\alpha)!}\int_{-\frac{\beta}{2}}^{t}\left(1+\frac{2s}{\beta}\right)^{2\alpha}\;\d s \nonumber\\
&\quad + \frac{\beta}{2N}\sum_{p=1}^{\alpha}\frac{(\alpha!)^2}{p!(2\alpha+1-p)!}\left(1+\frac{2t}{\beta}\right)^{2\alpha+1-p}\left(1-\frac{2t}{\beta}\right)^{p}\\
&= \frac{\beta}{2N}\frac{(\alpha!)^2}{(2\alpha+1)!}\sum_{p=0}^{\alpha}\left(\begin{array}{c}2\alpha+1\\p\end{array}\right)\left(1+\frac{2t}{\beta}\right)^{2\alpha+1-p}\left(1-\frac{2t}{\beta}\right)^{p}.
\end{align*}
Using
\begin{align}
N = 4^\alpha\beta\frac{(\alpha!)^2}{(2\alpha+1)!},
\end{align}
this gives the formula
\begin{align}\label{binomial-phi}
\phi_+^\alpha(t) = \sum_{p=0}^{\alpha}\left(\begin{array}{c}2\alpha+1\\p\end{array}\right)\left(\frac{1}{2}+\frac{t}{\beta}\right)^{2\alpha+1-p}\left(\frac{1}{2}-\frac{t}{\beta}\right)^{p}.
\end{align}
This simple formula shows that $\phi_+^\alpha(t)$ consists of the first $\alpha+1$ terms in the binomial expansion of $((\frac{1}{2}+\frac{t}{\beta})+(\frac{1}{2}-\frac{t}{\beta}))^{2\alpha+1}$ in powers of $\frac{1}{2}\pm\frac{t}{\beta}$.  The sum of the remaining terms is $\phi_+^\alpha(-t)$, so this expression makes it clear that $\phi_+^\alpha(t)+\phi_+^\alpha(-t)=1$.

With $\varphi_+$ chosen as in \eqref{hypergeometric-fct} and $\alpha\in(-\frac{1}{2},\infty)$, the coefficients in \eqref{Gauged-Skyrme-energy-x-vars} reduce to
\begin{align}\label{Ultra-spherical-couplings-1}
\begin{aligned}
x_1 &= \kappa_0 & x_3 &= \kappa_2 & x_5 &= \beta-2\kappa_2 \\
x_2 &= \frac{\kappa_1}{2} & x_4 &= 0 & x_6 &= \beta,
\end{aligned}
\end{align}
where $\kappa_0,\kappa_1$, and $\kappa_2$ are given by
\begin{align}
\kappa_0(\alpha,\beta)&=\frac{1}{\beta}\frac{\Gamma(2\alpha+1)^2\Gamma(2\alpha+2)^2}{\Gamma(4\alpha+2)\Gamma(\alpha+1)^4},\label{k_0}\\
\kappa_1(\alpha,\beta)&=\beta+2I_4(\alpha,\beta)-4I_2(\alpha,\beta),\label{k_1}\\
\kappa_2(\alpha,\beta)&=\beta-2I_2(\alpha,\beta)\label{k_2}
\end{align}
and
\begin{align}
    I_r(\alpha,\beta)=\int_{-\frac{\beta}{2}}^{\frac{\beta}{2}}\left(\phi_+^\alpha(t)\right)^r\;\d t.
\end{align}
One can show \cite{cork2018skyrmions} that
\begin{align}\label{I_2}
    I_2(\alpha,\beta)=\beta\left(\dfrac{1}{2}-\dfrac{\Gamma(2\alpha+2)^4}{\Gamma(4\alpha+4)\Gamma(\alpha+1)^3\Gamma(\alpha+2)}\right),
\end{align}
but no such closed-form expression has been found for $I_4(\alpha,\beta)$. However, $I_4$ can in principle be calculated explicitly for any fixed $\alpha\in\Z^+$, since in those cases \eqref{hypergeometric-fct} becomes a polynomial with rational coefficients, namely \eqref{binomial-phi}.
\subsection{Topological charge}
Ordinary, ungauged Skyrme fields are constant at spatial infinity, and thus have a well-defined integer degree $\BB$, which is interpreted physically as the baryon number. This may be calculated via
\begin{align}\label{baryon-number}
    \BB=\dfrac{1}{24\pi^2}\int_{\R^3}\tr(L\wedge L\wedge L),
\end{align}
with $L=U^{-1}\d U$ the ungauged current. When a gauging prescription is introduced, this quantity is no longer gauge-invariant, so a new definition of topological charge is required. Unfortunately, the na\"ive choice of replacing ordinary derivatives by covariant derivatives in \eqref{baryon-number} is no longer a topological invariant. The correct choice for the topological charge is \cite{CallanWitten1984monopole,PietteTchrakian2000static,ArthurTchrakian1996GaugedSky,BrihayeHartmannTchhrakian2001MonopolesGaugeSky,brihayeTchhrakian1998solitons}
\begin{align}\label{Gauged-Skyrme-charge}
    \QQ=\dfrac{1}{8\pi^2}\int\tr\left(\dfrac{1}{3}L^A\wedge L^A\wedge L^A-L^A\wedge\left(F^A+U^{-1}F^AU\right)\right).
\end{align}
If the Skyrme field is obtained from a gauge field $\wt{A}$ on $S^1\times\R^3$, as discussed in section \ref{subsec:sky-from-YM}, one can show that \cite{cork2018skyrmions} this coincides with the Yang--Mills topological charge,
\begin{align}\label{YM-charge}
    Q=\dfrac{1}{8\pi^2}\int\tr\left(F^{\wt{A}}\wedge F^{\wt{A}}\right).
\end{align}
The topological charge $\QQ$ is invariant under continuous deformations that fix the boundary conditions.  This follows immediately from its relationship with the Yang--Mills charge \eqref{YM-charge}, but can be seen more directly by the following argument, which mirrors that of \cite{ArthurTchrakian1996GaugedSky}.  Let
\begin{align}
\begin{aligned}
    \omega_0&=\frac{1}{3}\tr\left(L\wedge L\wedge L\right),&
    \rho&=\tr\left(\dfrac{1}{3}L^A\wedge L^A\wedge L^A-L^A\wedge\left(F^A+U^{-1}F^AU\right)\right),
\end{aligned}
\end{align}
denote the integrands of \eqref{baryon-number} and \eqref{Gauged-Skyrme-charge} respectively. One can show that (locally) we have
\begin{align}
    \rho-\omega_0
    =&\;\d\tr\left(\d_AU\wedge AU^{-1}-\d_AU^{-1}\wedge AU\right)\equiv\d\Omega,
\end{align}
where we have introduced the notation
\begin{align}
    \d_AU=\d U+\dfrac{1}{2}AU.
\end{align}
Thus
\begin{align}\label{charge-baryon-relationship}
    \QQ=\BB+\dfrac{1}{8\pi^2}\int_{\R^3}\d\Omega.
\end{align}
What this shows is that $\QQ$ is a standard topological invariant, plus a term which, by Stokes' theorem, only depends on the boundary conditions of the fields.

For the purpose of the following section, it is worth noting that the topological charge may be written in terms of the inner product \eqref{ip-forms} as
\begin{align}
    \QQ=\dfrac{1}{8\pi^2}\left\langle \star L^A,F^A+U^{-1}F^AU-\dfrac{1}{3}L^A\wedge L^A\right\rangle.\label{top-charge}
\end{align}
\section{Topological energy bounds}\label{sec:energy-bounds}

In this section we discuss topological energy bounds for the gauged Skyrme model, i.e.\ lower bounds on the energy \eqref{Gauged-Skyrme-energy-x-vars} which depend only on the charge \eqref{top-charge}.  We do so by applying a strategy developed in \cite{cork2018skyrmions} to this more general energy functional.  The bound that we derive in Theorem \ref{thm:energy-bound} is valid for parameters $(x_1,\ldots, x_6)$ lying in an open domain in $\R^6$.  We go on to derive a specialised bound in Theorem \ref{thm:restricted-energy-bound} which is valid on the boundary of this domain and which will be needed later.  Finally, we specialise to Skyrme models derived from Yang--Mills theory and compare our bounds with the Yang--Mills bound.
\subsection{General bounds for gauged Skyrme models}\label{subsec:gen-bounds}
To derive bounds for the energy \eqref{Gauged-Skyrme-energy-x-vars}, first we associate the two-forms $\star L^A$, $L^A\wedge L^A$, $F^A+U^{-1}F^AU$, and $F^A-U^{-1}F^AU$ with the variables $x,y,u,$ and $v$ respectively. Then the energy \eqref{Gauged-Skyrme-energy-x-vars} and charge \eqref{top-charge} are naturally associated with the following quadratic forms on $\R^4$:
\begin{align}
    \Omega_E(x,y,u,v)&=x_1x^2+x_2y^2+\dfrac{x_6}{4}u^2+\dfrac{x_5}{4}v^2-\left(\dfrac{x_3}{2}u+\dfrac{x_4}{2}v\right)y,\label{Omega_E}\\
    \Omega_{\QQ}(x,y,u,v)&=xu-\dfrac{1}{3}xy.\label{Omega_Q}
\end{align}
Since the bilinear form defined in \eqref{ip-forms} is positive definite, any linear combination of the energy \eqref{Gauged-Skyrme-energy-x-vars} and charge \eqref{top-charge} will be positive definite whenever the corresponding linear combination $a\Omega_E+b\Omega_{\QQ}$ is positive definite.

Considering these quadratic forms is sufficient for determining a topological bound of the form
\begin{align}\label{theoretical-bound}
    E\geq 8\pi^2C|\QQ|.
\end{align}
We shall do this by finding the maximal value of $C$ such that
\begin{align}
    \Omega(C)=\Omega_E\pm C\Omega_\QQ
\end{align}
is positive semidefinite.  For this strategy to work we need the quadratic form $\Omega_E$ to be positive definite.

\begin{proposition}\label{prop:energy-pos-def}
The quadratic form $\Omega_E$ is positive definite if and only if $x_1,x_2>0$, alongside the conditions
\begin{itemize}
    \item[(i)] $4x_2x_6>x_3^2$;
    \item[(ii)] $(4x_2x_6-x_3^2)x_5>x_6x_4^2$.
\end{itemize}
Hence the energy \eqref{Gauged-Skyrme-energy-x-vars} is positive if these conditions hold.
\end{proposition}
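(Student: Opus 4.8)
The plan is to write $\Omega_E$ in \eqref{Omega_E} as $w^\top M w$ for $w=(x,y,u,v)^\top$ and a real symmetric $4\times 4$ matrix $M$, and then apply Sylvester's criterion: a real symmetric matrix is positive definite if and only if all of its leading principal minors (in some fixed ordering) are positive. Reading off the coefficients, the variable $x$ appears only in $x_1x^2$ and therefore decouples, contributing the $1\times 1$ block $M_{11}=x_1$, whose positivity is exactly $x_1>0$. The remaining variables $(y,u,v)$ assemble into the symmetric block
\begin{align}
B=\begin{pmatrix} x_2 & -x_3/4 & -x_4/4 \\ -x_3/4 & x_6/4 & 0 \\ -x_4/4 & 0 & x_5/4 \end{pmatrix},\notag
\end{align}
so the whole problem reduces to showing $B$ is positive definite if and only if $x_2>0$ together with (i) and (ii).

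Next I would apply Sylvester's criterion to $B$ in the order $(y,u,v)$. The first leading minor is $x_2$, giving $x_2>0$. The second is $\det\begin{pmatrix} x_2 & -x_3/4 \\ -x_3/4 & x_6/4\end{pmatrix}=\tfrac{1}{16}(4x_2x_6-x_3^2)$, whose positivity is precisely condition (i). The third is $\det B$; expanding along the last row (which has a single off-diagonal entry) gives $\det B=\tfrac{1}{64}\big[(4x_2x_6-x_3^2)x_5-x_6x_4^2\big]$, whose positivity is precisely condition (ii). Since all three minors are positive exactly when $x_2>0$, (i), and (ii) hold, the equivalence for $B$ follows, and combining with $x_1>0$ gives the statement for $\Omega_E$.

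Two remarks round this off. First, the stated hypotheses secretly force $x_5,x_6>0$: from $x_2>0$ and (i) one gets $x_6>0$ since $x_3^2\geq 0$, and then (i) together with (ii) forces $x_5>0$; this confirms internal consistency and explains why no separate positivity of $x_5,x_6$ is assumed. Second, the final sentence of the Proposition is then immediate: since the $L^2$ pairing \eqref{ip-forms} is positive definite, substituting the two-forms $\star L^A$, $L^A\wedge L^A$, $F^A+U^{-1}F^AU$, $F^A-U^{-1}F^AU$ for $x,y,u,v$ turns the positive-definite form $\Omega_E$ into a pointwise non-negative integrand, so $E[U,A]\geq 0$.

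I do not anticipate a genuine obstacle: the argument is a direct determinant evaluation, and the only care needed is to fix the variable ordering so that the leading minors reproduce (i) and (ii) verbatim, which is why I split off $x$ first and then take $(y,u,v)$. An equivalent route is to complete the square successively in $v$ and $u$, reproducing (i) and (ii) as the positivity of the corresponding Schur complements; I would use this as a cross-check but present the minor computation as the primary argument.
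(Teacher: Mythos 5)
Your proposal is correct and follows essentially the same route as the paper: the paper applies Sylvester's criterion directly to the full $4\times4$ matrix $M_E$, whose leading principal minors $m_1=x_1$, $m_2=x_1x_2$, $m_3=\tfrac{x_1}{16}(4x_2x_6-x_3^2)$, $m_4=\tfrac{x_1}{64}\bigl((4x_2x_6-x_3^2)x_5-x_6x_4^2\bigr)$ are exactly $x_1$ times the minors you compute for your block $B$, so splitting off the decoupled $x$-block is a cosmetic rearrangement of the identical computation. Your determinant evaluations and the observation that the hypotheses force $x_5,x_6>0$ are both correct (the latter also appears in the paper, just after the proposition, in justifying the radical in \eqref{C_max}).
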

\begin{proof}
The associated symmetric matrix $M_E$ for the quadratic form $\Omega_E$ has leading principal minors
\begin{align}
\begin{aligned}
    m_1&=x_1,&
    m_3&=\dfrac{x_1}{16}\left(4x_2x_6-x_3^2\right),\\
    m_2&=x_1x_2,&
    m_4&=\dfrac{x_1}{64}\left((4x_2x_6-x_3^2)x_5-x_6x_4^2\right).
\end{aligned}
\end{align}
By Sylvester's criterion, $M_E$ (and hence $\Omega_E$) is positive definite if and only if these are all strictly positive, which is equivalent to the conditions stated.
\end{proof}

We shall denote the subset of $\R^6$ defined by the conditions of Proposition \ref{prop:energy-pos-def} by $\DD$. We now derive a lower bound in the cases where $\Omega_E$ is positive definite, that is, for $(x_1,\dots,x_6)\in\DD$. We seek the smallest positive value $C_{\mathrm{max}}$ of $C$ such that $\Omega(C)$ is not positive definite.  This is the same as the smallest value of $C$ such that 0 is an eigenvalue of the associated symmetric matrix $M_C$ to $\Omega(C)$, because the real eigenvalues of $M_C$ depend continuously on $C$ and are all positive for $C=0$.  In other words, $C_{\mathrm{max}}$ is the smallest positive solution of $\det(M_C)=0$.  Note that $C_{\mathrm{max}}$ certainly exists, because $\Omega_{\QQ}$ is not positive definite.

We find that
\begin{align}
    \det(M_C)=&\dfrac{x_1}{64}\left((4x_2x_6-x_3^2)x_5-x_6x_4^2\right)-\dfrac{C^2}{576}\left(x_5\left(x_6+36x_2-6x_3\right)-9x_4^2\right).
\end{align}
Thus, given $(x_1,\dots,x_6)\in\DD$, $\det(M_C)=0$ with $C\geq 0$ if and only if $C=C_{\mathrm{max}}$, where
\begin{align}\label{C_max}
    C_{\mathrm{max}}&=\sqrt{\dfrac{9x_1\left((4x_2x_6-x_3^2)x_5-x_6x_4^2\right)}{x_5\left(x_6+36x_2-6x_3\right)-9x_4^2}}.
\end{align}

For \eqref{C_max} to make sense the expression under the radical must be non-negative.  If the conditions of Proposition \ref{prop:energy-pos-def} hold, the following argument demonstrates this to indeed be the case. The numerator is positive by condition \emph{(ii)} and the fact that $x_1>0$. Condition \emph{(i)} alongside $x_2>0$ implies that $x_6>0$, thus we must also have $x_5>0$ by comparing this with condition \emph{(ii)}. Therefore, we have by condition \emph{(ii)} that
\begin{align}
    x_6x_5\left(x_6+36x_2-6x_3\right)-9x_6x_4^2>x_5\left(x_6^2+9x_3^2-6x_3\right)=x_5\left(3x_3-x_6\right)^2>0.
\end{align}
Thus the denominator is positive, since $x_6>0$. We have therefore proven the following.
\begin{theorem}\label{thm:energy-bound}
Let $(x_1,\dots,x_6)\in\DD\subset\R^6$ satisfy the inequalities in Proposition \ref{prop:energy-pos-def}. Then the energy \eqref{Gauged-Skyrme-energy-x-vars} is bounded below by the topological charge \eqref{top-charge} as
\begin{align}\label{main-top-bound}
    E\geq 24\pi^2\sqrt{\dfrac{x_1\left((4x_2x_6-x_3^2)x_5-x_6x_4^2\right)}{x_5\left(x_6+36x_2-6x_3\right)-9x_4^2}}|\QQ|.
\end{align}
\end{theorem}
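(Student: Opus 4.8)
The plan is to convert the functional inequality \eqref{main-top-bound} into a finite-dimensional statement about positive semidefiniteness of a symmetric $4\times 4$ matrix, exploiting the positive definiteness of the $L^2$ pairing \eqref{ip-forms}. The starting observation is that both the energy \eqref{Gauged-Skyrme-energy-x-vars} and $8\pi^2$ times the charge \eqref{top-charge} are built entirely from $L^2$ inner products of the four two-forms $\xi_1=\star L^A$, $\xi_2=L^A\wedge L^A$, $\xi_3=F^A+U^{-1}F^AU$, and $\xi_4=F^A-U^{-1}F^AU$; matching coefficients shows that $E$ is obtained from the quadratic form $\Omega_E$ of \eqref{Omega_E} and $8\pi^2\QQ$ from $\Omega_\QQ$ of \eqref{Omega_Q} by replacing each monomial $x_ax_b$ in the variables $(x,y,u,v)$ with the corresponding inner product $\langle\xi_a,\xi_b\rangle$ (using that $\star$ is an isometry on $\R^3$, so $\norm{\star L^A}=\norm{L^A}$). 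Consequently, for the symmetric matrix $M_C$ of $\Omega(C)=\Omega_E\pm C\Omega_\QQ$, one has the identity $E\pm 8\pi^2 C\QQ=\sum_{a,b}(M_C)_{ab}\langle\xi_a,\xi_b\rangle$.

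Next I would use the spectral decomposition of $M_C$ to turn semidefiniteness into positivity: if $M_C\succeq 0$, then writing $M_C=\sum_k\lambda_k w_k w_k^{\mathsf T}$ with $\lambda_k\geq 0$ gives $E\pm 8\pi^2 C\QQ=\sum_k\lambda_k\norm{\sum_a (w_k)_a\xi_a}^2\geq 0$, where nonnegativity of each term is guaranteed by \eqref{ip-forms}. Taking both signs yields $E\geq 8\pi^2 C|\QQ|$, i.e.\ \eqref{theoretical-bound}, for every $C$ with $M_C\succeq 0$, so the task reduces to maximising $C$ subject to this constraint.

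To identify the optimal constant, I would argue that since $\Omega_E$ is positive definite on $\DD$ by Proposition \ref{prop:energy-pos-def}, the eigenvalues of $M_C$ are all positive at $C=0$ and vary continuously with $C$; hence the largest $C$ for which $M_C\succeq 0$ is the smallest positive root $C_{\mathrm{max}}$ of $\det M_C=0$, and at that value one eigenvalue vanishes while the rest remain positive, so $M_{C_{\mathrm{max}}}\succeq 0$. Computing the $4\times 4$ determinant shows it is affine in $C^2$, and solving $\det M_C=0$ produces the explicit $C_{\mathrm{max}}$ of \eqref{C_max}. Finally I would verify that the radicand in \eqref{C_max} is nonnegative on $\DD$: its numerator is positive by condition (ii) together with $x_1>0$, while condition (i) and $x_2>0$ force $x_6>0$ and then $x_5>0$, from which the denominator is seen to be positive after completing a square of the form $x_5(3x_3-x_6)^2$. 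Substituting $C=C_{\mathrm{max}}$ and using $8\pi^2 C_{\mathrm{max}}=24\pi^2\sqrt{\smash[b]{\cdots}}$ then delivers \eqref{main-top-bound}.

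The conceptual heart—and the step most deserving of care—is the bridge in the first two paragraphs: the assertion that algebraic semidefiniteness of the $4\times4$ form suffices for nonnegativity of the $L^2$ functional. This is clean only because the energy and charge turn out to be \emph{exactly} inner-product evaluations of $\Omega_E$ and $\Omega_\QQ$ with no residual terms, so the completion of squares via the spectral decomposition is lossless; confirming this matching of coefficients, and the continuity-of-eigenvalues argument that $M_{C_{\mathrm{max}}}$ is genuinely semidefinite rather than merely having vanishing determinant, are the points I would treat most carefully. The determinant computation and the radicand positivity are then routine algebra.
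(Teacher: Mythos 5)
Your proposal is correct and follows essentially the same route as the paper: it associates the four two-forms with the variables of the quadratic forms $\Omega_E$ and $\Omega_\QQ$, identifies $C_{\mathrm{max}}$ as the smallest positive root of $\det(M_C)=0$ via continuity of eigenvalues, and verifies positivity of the radicand exactly as in the text. Your spectral-decomposition elaboration of why $M_C\succeq 0$ forces $E\pm 8\pi^2C\QQ\geq 0$ merely makes explicit a step the paper states in one line, so the two arguments coincide in substance.
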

\noindent The bound described by \eqref{main-top-bound} is a generalisation of, but does not improve upon, previously known topological bounds for gauged Skyrme models. For example, it is straightforward to check that with the couplings given by \eqref{Ultra-spherical-couplings-1}-\eqref{k_2}, this bound reduces to
\begin{align}\label{Ultraspherical-lower-bound}
    E_\alpha\geq 8\pi^2C(\alpha)|\QQ|,\quad C(\alpha)=\sqrt{\dfrac{9\kappa_0(2\kappa_1-\kappa_2^2)}{1+18\kappa_1-6\kappa_2}}
\end{align}
as was previously obtained in \cite{cork2018skyrmions}.  More traditional gauged Skyrme models such as those considered in \cite{PietteTchrakian2000static,ArthurTchrakian1996GaugedSky,BrihayeHartmannTchhrakian2001MonopolesGaugeSky,brihayeTchhrakian1998solitons} can be viewed as \eqref{Gauged-Skyrme-energy} with the choice of couplings $\chi_3=\chi_4=\chi_5=0$, which corresponds to $x_3=x_4=0$ and $x_5=x_6$ for \eqref{Gauged-Skyrme-energy-x-vars}. In this case, the conditions of Proposition \ref{prop:energy-pos-def} are simply that the remaining couplings $\chi_0= x_1,\chi_1= x_2,\chi_2= x_5=x_6$ are positive. In \cite{BrihayeHartmannTchhrakian2001MonopolesGaugeSky,brihayeTchhrakian1998solitons}, the couplings are defined by arbitrary parameters $\kappa_0,\kappa_1,\kappa_2$, which relate to our notation via $x_1=\frac{1}{2}\kappa_1^2$, $x_2=\frac{1}{8}\kappa_2^4$, $x_5=x_6=\kappa_0^4$. Simplifying \eqref{main-top-bound} with these parameters gives the bound
\begin{align}
    E\geq48\pi^2\dfrac{\kappa_1\kappa_2^2}{\sqrt{1+9\left(\frac{\kappa_2}{\kappa_0}\right)^4}}|\QQ|,
\end{align}
which corresponds (up to a rescaling of energy units) to the bound given in \cite{BrihayeHartmannTchhrakian2001MonopolesGaugeSky,brihayeTchhrakian1998solitons}.

Before moving on, we point out that the strategy adopted here could be improved in certain cases.  Instead of \eqref{Omega_E}, we could consider a quadratic form
\begin{align}
\Omega_E'(x,y,u,v)=x_1x^2+x_2y^2+\dfrac{x_6}{4}u^2+\dfrac{x_5}{4}v^2-\left(\dfrac{x_3}{2}u+\dfrac{x_4}{2}v\right)y+x_7 uv
\end{align}
for a non-zero real parameter $x_7$. This more general form is associated with the energy functional \eqref{Gauged-Skyrme-energy-x-vars} in the same way as \eqref{Omega_E}, because $\ip{F^{U,A}_+,F^{U,A}_-}_{L^2}=0$.  If $x_4\neq0$ this could lead to a better energy bound, but if $x_4=0$ and $x_5>0$ (the case of most interest to us) this does not lead to any improvement.  To see this, introduce new variables $\tilde{x}=x,\tilde{y}=y,\tilde{u}=u,\tilde{v}=v+2x_7u/x_5$.  Then $\Omega_E',\Omega_{\QQ}$ are equivalent to
\begin{align}
\widetilde{\Omega}_E'(\tilde{x},\tilde{u},\tilde{u},\tilde{v})&=x_1\tilde{x}^2+x_2\tilde{y}^2+\left(\dfrac{x_6}{4}-\dfrac{x_7^2}{x_5}\right)\tilde{u}^2+\dfrac{x_5}{4}\tilde{v}^2-\dfrac{x_3}{2}\tilde{u}\tilde{y} \\
\widetilde{\Omega}_{\QQ}(\tilde{x},\tilde{u},\tilde{u},\tilde{v})&=\tilde{x}\tilde{u}-\dfrac{1}{3}\tilde{x}\tilde{y}.
\end{align}
We see that $\widetilde{\Omega}_{\QQ}=\Omega_{\QQ}$ and $\widetilde{\Omega}_{E}'$ has the same form as $\Omega_E$ (with $x_4=0$), but $\tilde{\Omega}_E'$ is bounded from above by $\Omega_E$.  So any bound obtained for $\Omega_E'$ will be no better than one obtained for $\Omega_E$.

\subsection{Energy bounds restricted to the boundary of a domain}\label{sec:non-strict}
The bound \eqref{main-top-bound} applies for parameters $(x_1,\ldots,x_6)$ lying in an open domain $\DD\in\R^6$, defined by the conditions in Proposition \ref{prop:energy-pos-def}.  On the boundary of this domain the energy is still positive so it is of interest to look for energy bounds valid on the boundary of this domain.

Of particular interest in the next section is the boundary defined by the condition
\begin{align}\label{boundary-equation}
K:=4x_2x_5x_6-x_3^2x_5-x_4^2x_6=0.
\end{align}
Notice that the energy bound \eqref{main-top-bound} can be written in terms of $K$ as
\begin{align}
E\geq 24\pi^2\sqrt{\frac{x_1x_6K}{9K+x_5(x_6-3x_3)^2}}|\QQ|.
\end{align}
It helps to think about the limiting behaviour as $K\to0$.  Generically, as $K\to 0$, this lower bound tends to zero and we are unlikely to obtain a useful topological bound on the boundary component.  However, if both $K=0$ and $x_6=3x_3$ the numerator and denominator both vanish, and there is some hope of obtaining a non-trivial topological energy bound.

We therefore only consider the case
\begin{align}\label{restricted-condition}
    x_6=3x_3>0.
\end{align}
We assume moreover that $x_5>0$.  Substituting this back into \eqref{boundary-equation}-\eqref{restricted-condition} yields the condition
\begin{align}
    x_2=\dfrac{x_3}{12}+\dfrac{x_4^2}{4x_5}.
\end{align}
In this case, the energy \eqref{Gauged-Skyrme-energy-x-vars}, interpreted through the quadratic form \eqref{Omega_E}, takes the form
\begin{align}
    \Omega_E
    &=x_1x^2+\dfrac{x_3}{12}\left(y-3u\right)^2+\dfrac{1}{4x_5}\left(x_4y-x_5v\right)^2.
\end{align}
Thus, for $x_3,x_5>0$, we may apply the following Bogomol'nyi-style completing the square argument: since
\begin{align}
    \Omega_E=\dfrac{x_3}{12}\left(2\sqrt{\dfrac{3x_1}{x_3}}x\pm\left(y-3u\right)\right)^2\pm\dfrac{x_3}{3}\sqrt{\dfrac{3x_1}{x_3}}x(3u-y)+\dfrac{1}{4x_5}\left(x_4y-x_5v\right)^2,\label{BOG-ineq}
\end{align}
comparing with \eqref{Omega_Q} we have the inequality $\Omega_E\geq\sqrt{3x_1x_3}|\Omega_{\QQ}|$.  Thus we have proved:
\begin{theorem}\label{thm:restricted-energy-bound}
Suppose that the $(x_1,\ldots,x_6)\in\R^6$ satisfy \eqref{boundary-equation} and \eqref{restricted-condition}, and that $x_1,x_5>0$.  Then the energy \eqref{Gauged-Skyrme-energy-x-vars} satisfies
\begin{align}\label{restricted-top-bound}
    E\geq 8\pi^2\sqrt{3x_1x_3}|\QQ|.
\end{align}
\end{theorem}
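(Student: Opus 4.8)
The plan is to reduce the field-theoretic inequality to a purely algebraic statement about the quadratic forms $\Omega_E$ and $\Omega_{\QQ}$, exactly as in the proof of Theorem \ref{thm:energy-bound}, but now exploiting the fact that on the locus \eqref{boundary-equation}--\eqref{restricted-condition} the form $\Omega_E$ degenerates enough to admit an \emph{exact} Bogomol'nyi completion of the square, rather than the $\det(M_C)=0$ eigenvalue argument used before (which, as noted in the discussion preceding the theorem, generically gives a vanishing constant on this boundary).

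First I would impose the hypotheses at the algebraic level. Substituting $x_6=3x_3$ from \eqref{restricted-condition} into \eqref{boundary-equation} and using $x_5>0$ solves for $x_2=\frac{x_3}{12}+\frac{x_4^2}{4x_5}$, and inserting this into \eqref{Omega_E} collapses the six-term quadratic form into the three-term expression
\begin{align*}
\Omega_E=x_1x^2+\frac{x_3}{12}(y-3u)^2+\frac{1}{4x_5}(x_4y-x_5v)^2.
\end{align*}
Since $x_3>0$ by \eqref{restricted-condition}, together with $x_5>0$ the last two terms are non-negative perfect squares for every real $(x,y,u,v)$.

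The decisive step is the completion of the square recorded in \eqref{BOG-ineq}: I would recombine the $x_1x^2$ term with $\frac{x_3}{12}(y-3u)^2$ into $\frac{x_3}{12}\big(2\sqrt{3x_1/x_3}\,x\pm(y-3u)\big)^2$, which extracts the single cross-term $\pm\frac{x_3}{3}\sqrt{3x_1/x_3}\,x(3u-y)=\pm\sqrt{3x_1x_3}\,(xu-\frac13xy)$. Recognising the bracket as $\Omega_{\QQ}$ from \eqref{Omega_Q}, this shows that \emph{both} $\Omega_E-\sqrt{3x_1x_3}\,\Omega_{\QQ}$ and $\Omega_E+\sqrt{3x_1x_3}\,\Omega_{\QQ}$ are sums of perfect squares, hence positive semidefinite on $\R^4$.

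Finally I would transfer this to the fields. The energy \eqref{Gauged-Skyrme-energy-x-vars} is $\Omega_E$ evaluated on the $L^2$ inner products of the two-forms $\star L^A$, $L^A\wedge L^A$, $F_+^{U,A}$, $F_-^{U,A}$ assigned to $x,y,u,v$, while $8\pi^2\QQ$ is the analogous evaluation of $\Omega_{\QQ}$ through \eqref{top-charge}. Because the pairing \eqref{ip-forms} is positive definite, a form that is positive semidefinite in the four scalar variables stays non-negative when those variables are promoted to elements of this inner-product space; applying this to $\Omega_E\mp\sqrt{3x_1x_3}\,\Omega_{\QQ}$ gives $E\mp 8\pi^2\sqrt{3x_1x_3}\,\QQ\geq0$, and combining the two sign choices yields \eqref{restricted-top-bound}. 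The only point genuinely requiring care is this last promotion, namely why sign-definiteness of a form on $\R^4$ survives passage to the inner-product space; but this is precisely the mechanism already underlying Theorem \ref{thm:energy-bound} and rests solely on the positive-definiteness of \eqref{ip-forms}.
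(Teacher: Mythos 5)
Your proposal is correct and takes essentially the same approach as the paper: you perform the identical substitution of \eqref{boundary-equation} and \eqref{restricted-condition} to reduce $\Omega_E$ to the three-square form, the identical Bogomol'nyi completion \eqref{BOG-ineq} extracting the cross-term $\pm\sqrt{3x_1x_3}\,\Omega_{\QQ}$, and the identical transfer to the fields via positive definiteness of the pairing \eqref{ip-forms}. The only difference is presentational, in that you make explicit the promotion step (semidefiniteness of the form on $\R^4$ implying non-negativity of the corresponding $L^2$ combination) which the paper states once at the start of section \ref{subsec:gen-bounds} and then uses silently.
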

\subsection{Comparison with Yang--Mills bound}
Ultimately we are interested in versions of the gauged Skyrme model derived from Yang--Mills theory.  We now consider whether the bounds derived above apply in this situation, and compare them with the Yang--Mills topological energy bound.

For models derived from Yang--Mills theory the couplings are given by \eqref{YM-couplings-ints} and satisfy
\begin{align}\label{YM-x5-and-x6}
x_5=\beta-2x_3,\quad x_6=\beta.
\end{align}
The inequalities in Proposition \ref{prop:energy-pos-def} can then be written as
\begin{align}
    x_1,x_2&>0,\label{YM-couplings-cond1,2}\\
    4x_2\beta-x_3^2&>0,\label{YM-couplings-cond3}\\
    (4x_2\beta-x_3^2)(\beta-2x_3)-\beta x_4^2&>0.\label{YM-couplings-cond4}
\end{align}
These conditions are satisfied for any choice of function $\varphi_+$ satisfying
\begin{align}\label{phi-odd}
\varphi_+(-t)=1-\varphi_+(t).
\end{align}
To see this, note first that \eqref{YM-couplings-cond1,2} follows from \eqref{YM-couplings-ints} because $\varphi_+$ is not constant. Furthermore, the condition \eqref{YM-couplings-cond3} is equivalent to
\begin{align}\label{Cauchy--Schwarz}
    \left(\int_{-\frac{\beta}{2}}^{\frac{\beta}{2}}(1-\varphi_+)^2\varphi_+^2\;\d t\right)\left(\int_{-\frac{\beta}{2}}^{\frac{\beta}{2}}\;\d t\right)-\left(\int_{-\frac{\beta}{2}}^{\frac{\beta}{2}}(1-\varphi_+)\varphi_+\;\d t\right)^2>0,
\end{align}
which follows immediately from the Cauchy--Schwarz inequality, and is a strict inequality since $(1-\varphi_+)\varphi_+$ is not constant.  By \eqref{phi-odd} the integrand for $x_4$ in \eqref{YM-couplings-ints} is odd, so by symmetry $x_4=0$.  Then since \eqref{YM-couplings-cond3} is true, \eqref{YM-couplings-cond4} reduces to $\beta-2x_3>0$, which holds since
\begin{align}
\beta-2x_3=\int_{-\frac{\beta}{2}}^{\frac{\beta}{2}}(1-2\varphi_+)^2\;\d t>0.
\end{align}

If we drop the assumption \eqref{phi-odd} then the inequalities \eqref{YM-couplings-cond1,2} and \eqref{YM-couplings-cond3} are satisfied, but \eqref{YM-couplings-cond4} may not be. For example, if $\varphi_+$ is not monotonically increasing, such as
\begin{align}
\varphi_+(t)=\sin\left(\frac{3t\pi}{2}-\frac{\pi}{4}\right),
\end{align}
then
\begin{align}
    \left(4x_2-x_3^2\right)(1-2x_3)-x_4^2
    =-\frac{768 + 2048\pi + 1512\pi^2 + 243\pi^3}{162 \pi^3}
    <0.
\end{align}
One obstruction here is due to yielding $x_4\neq0$. However, in this case there is probably room for improvement in our analysis (see comments at the end of section \ref{subsec:gen-bounds}).

So, the bound \eqref{main-top-bound} applies to gauged Skyrme models which are derived from Yang--Mills and satisfy \eqref{phi-odd}.  The second bound \eqref{restricted-top-bound} applies to any Yang--Mills-derived gauged Skyrme model, provided the couplings satisfy the boundary-defining conditions \eqref{boundary-equation} and \eqref{restricted-condition}.  To see this, note that \eqref{restricted-condition} and \eqref{YM-x5-and-x6} imply that $x_3=x_5=\beta/3$.  So the coupling constants satisfy all of the hypotheses for the bound \eqref{restricted-top-bound}, which takes the form
\begin{align}\label{restricted-top-bound2}
E\geq 8\pi^2\sqrt{x_1\beta}|\QQ|.
\end{align}

A gauged Skyrme model that is derived from Yang--Mills theory satisfies a third, more obvious, topological energy bound, derived from the bound on the Yang--Mills energy.  It is well-known that the Yang--Mills energy satisfies the inequality
\begin{align}\label{YM-top-bound}
\int \tr(F^{\tilde{A}}\wedge\star(F^{\tilde{A}})^\dagger) \geq \left|\int \tr(F^{\tilde{A}}\wedge F^{\tilde{A}})\right|.
\end{align}
Since the Yang--Mills topological charge $\frac{1}{8\pi^2}\int\tr(F^{\tilde{A}}\wedge F^{\tilde{A}})$ coincides with \eqref{Gauged-Skyrme-charge}, the gauged Skyrme model topological charge \cite{cork2018skyrmions}, this leads to a bound $E\geq 8\pi^2|\QQ|$ in any gauged Skyrme model derived from Yang--Mills theory.  Our second bound \eqref{restricted-top-bound2} is stronger than the Yang--Mills bound, because
\begin{align}
    \sqrt{x_1\beta}=\sqrt{\int_{-\frac{\beta}{2}}^{\frac{\beta}{2}}{\varphi_+'(t)}^2\;\d t\int_{-\frac{\beta}{2}}^{\frac{\beta}{2}}\;\d t}\geq\int_{0}^1\varphi_+\;\d\varphi_+=1
\end{align}
by the Cauchy--Schwarz inequality.  We do not know in general whether the first bound \eqref{main-top-bound} is also stronger than the Yang--Mills bound, but it does seem to be stronger in certain cases.  For example, in \cite{cork2018skyrmions} it was verified numerically that for the ultraspherical family \eqref{hypergeometric-fct}, the bound \eqref{Ultraspherical-lower-bound} is always stronger than the Yang--Mills bound, and in fact one can straightforwardly see it agrees with the Yang--Mills bound when $\alpha=0$.

\section{Optimising the BMI in gauged Skyrme models}\label{sec:opt-BMI}
In this section we aim to optimise the BMI \eqref{BMI} for gauged skyrmions, for each of the bounds considered in the previous section. As it is dimensionless, the BMI allows a meaningful comparison between models with different choices of coupling constants. We approximate minimisers by using caloron-generated gauged Skyrme fields, and compare with numerical minimisers of the energy functional. In particular, we consider the problem of optimising the BMI from two different angles. The first, and most obvious approach that we use is to fix a set of parameters $(x_1,\dots,x_6)\in \DD\subset\R^6$, which we do via \eqref{Ultra-spherical-couplings-1} derived from the ultraspherical kink \eqref{hypergeometric-fct}, and then minimise over all field configurations $(U,A)$ satisfying certain boundary conditions, which we describe in the following sections. The second approach is to use a fixed caloron configuration $(U,A)$, and to minimise over all coupling constants $(x_1,\dots,x_6)\in\ol{\DD}\subset\R^6$. From this, we obtain an optimal set of couplings which lie on the boundary $\bdy\DD$, from which we find numerical minimisers and compare to the caloron approximation.
\subsection{Calorons as gauged skyrmions}\label{subsec:skyrme-from-cal}
Calorons are anti-self-dual $\SU(2)$ gauge fields $\wt{A}$ on $S^1\times\R^3$.  They saturate the Yang--Mills topological bound \eqref{YM-top-bound}, so are minima of the Yang--Mills energy.  Therefore they give good candidates for approximating minimisers of a gauged Skyrme energy derived from Yang--Mills theory. 

\subsubsection{KvBLL calorons}
We consider the energy of gauged skyrmions obtained from a particular family of calorons, the Kraan--van Baal--Lee--Lu (KvBLL) calorons \cite{KraanVanBaal1998,LeeLu1998}, and compare with numerically-obtained minimisers of the gauged Skyrme energy.  The KvBLL calorons give gauged skyrmions with $\QQ=-1$.  Other calorons and monopoles with $|\QQ|\leq 1$, and the corresponding numerical gauged skyrmions were studied earlier in \cite{cork2018skyrmions}.

The main reason to study the KvBLL calorons is that they satisfy a boundary condition
\begin{align}\label{boundary-condition}
U \to -{\rm i}\sigma^3\;\text{ as }\;r\to\infty,
\end{align}
which breaks the gauge symmetry to the physically realistic $\U(1)$.  As a result, they have only axial symmetry and not spherical symmetry.  As we will see, their energy is closer to the topological bound than that of the spherically-symmetric $|\QQ|=1$ calorons studied in \cite{cork2018skyrmions}. In general, calorons satisfy a boundary condition of the form\footnote{For a full description of the caloron boundary conditions, see \cite{Nyethesis}.}
\begin{align}\label{caloron-bdy}
    \wt{A}_t=\left(\mu-\dfrac{m}{2r}\right){\rm i}\sigma^3+O(r^{-2}),\quad \wt{A}_r=O(r^{-3}),
\end{align}
where $t$ is a coordinate on $S^1$ and $r$ is the radial coordinate in $\R^3$. Here $0\leq\mu<\pi/\beta$ is called the \textit{holonomy parameter} and $m\in\Z$ is a topological charge called the \textit{magnetic charge} of the caloron. Calorons possess a second topological charge $k\in\Z$ called the \textit{instanton number}, which roughly speaking measures the winding of the caloron on the interior of $S^1\times\R^3$. In the cases where $m=0$, the holonomy is constant at infinity, and therefore has a well-defined degree, and this is precisely $-k$, (minus) the instanton number.

Calorons are often thought of in terms of two \emph{constituent monopoles} \cite{kraanVanBaal1998monopoleconsts} of charges $(m_1,m_2)$ and masses $(\nu_1,\nu_2)$. These quantities are determined by the magnetic charge, instanton number, holonomy parameter, and the period:
\begin{align}
    (m_1,m_2)=(m+k,k),\quad(\nu_1,\nu_2)=(2\mu,\mu_0-2\mu), \quad \mu_0:=\frac{2\pi}{\beta}.
\end{align}
The total charge \eqref{YM-charge} of such a caloron may be computed through these numbers as \cite{Nyethesis}
\begin{align}\label{caloron-charge}
    Q=-(m_1\nu_1+m_2\nu_2).
\end{align}

Calorons for which $\mu=0$ are said to have \textit{trivial holonomy}, whereas calorons with $0<\mu<\mu_0/2$ are said to have \textit{non-trivial (asymptotic) holonomy}. The first calorons with non-trivial holonomy and positive instanton number were constructed independently by Kraan and van Baal \cite{KraanVanBaal1998}, and Lee and Lu \cite{LeeLu1998}.  These have magnetic charge $m=0$ and instanton number $k=1$, and hence from \eqref{caloron-charge}, the corresponding gauged skyrmions have topological charge $\QQ=-1$.

The fundamental idea behind the construction of the KvBLL calorons is that of the \textit{Nahm transform for calorons} \cite{Nyethesis,KraanVanBaal1998Tdual,CharbonneauHurtubise2007nahm.tfm}. A more detailed review of KvBLL calorons and the Nahm transform is given in appendix \ref{appendix:KvBLL}, which is where our full conventions are clarified.  The full solution depends on eight moduli space parameters, and the holonomy parameter $\mu$. Of the moduli space parameters, four of these correspond to translation and, as the solution is axially-symmetric, two correspond to rotations of $\R^3$. There is also an overall phase between the constituent monopoles corresponding to a global $\U(1)\subset\SU(2)$ gauge transformation. The remaining parameter is a scale parameter $\lambda$.  When $\lambda$ is small the caloron resembles a charge $-1$ instanton located at a point in $S^1\times\R^3$.  When $\lambda$ is large the caloron resembles a pair of monopoles in $\R^3$ separated by a distance $\lambda^2/2$.  Their masses are $2\mu$ and $\mu_0-2\mu$, and when $\mu=\mu_0/4$ the masses are equal and the solution acquires additional discrete symmetries that swap the two constituent monopoles \cite{cork2018symmrot}.

The resulting family of skyrmions will be called \emph{KvBLL skyrmions}, and we will calculate their energy below. We remark that these will not be skyrmions in the strict sense of satisfying the field equations, but only approximations to the true minimisers. For convenience we fix $\beta=1$ (and hence $\mu_0=2\pi$).  No generality is lost here, because results for $\beta\neq1$ are easily obtained by a simple rescaling of parameters in the gauged Skyrme energy.  We will also fix $\mu=\mu_0/4=\pi/2$, as these calorons have additional discrete symmetries, and this choice simplifies a lot of later calculations.  We will fix the location of the caloron at the origin in $S^1\times\R^3=(\R/\Z)\times\R^3$, and rotate it so that it is axially symmetric about the $X_3$-axis in $\R^3$. We also fix the phase between the constituents to $-\frac{\pi}{2}$. Although transformations of $\R^3$ can be used to fix the location and orientation in $\R^3$, the way that we extract a skyrmion from a caloron breaks the translation symmetry in $S^1$, so fixing the location in $S^1$ entails some loss of generality.  Our choice to locate at $t=0$ seems natural in that the caloron is equidistant from the two hypersurfaces $t=\pm\beta/2$ between which we calculate holonomy.  Our choices then leave a single parameter $\lambda>0$ that can be varied.

\subsubsection{The energy of KvBLL skyrmions}
To calculate the energy of the KvBLL skyrmions we need to know the components $A_j$ of the gauge field and the Skyrme field $U$.  The former are easily obtained by evaluating the components of the caloron gauge field at time $t=-\beta/2$, which is found using the Nahm transform (see appendix \ref{appendix:KvBLL}).  The latter corresponds to holonomy of the caloron about circles parameterised by $t$, and is difficult to calculate because the calorons that we work with have a gauge singularity at $(t,\vec{X})=(0,\vec{0})$.  We circumvent this difficulty using a geometric interpretation of the Nahm transform, which we briefly describe here.

The Nahm transform generates an orthonormal frame $\{e^1,e^2\}$ for a rank $2$ subbundle $V$ of an infinite-rank trivial hermitian vector bundle over $S^1\times\R^3$. The infinite-rank bundle carries a flat connection, and the caloron is the induced connection on the subbundle.  This is the connection given by the $2\times2$ matrix of $1$-forms with components
\begin{align}
    \wt{A}^{ab}=\ip{e^a,\d e^b}.
\end{align}
For details, see appendix \ref{appendix:KvBLL}.  The Skyrme field $U$ is the holonomy of this connection, and we approximate it using the formula
\begin{align}\label{approximate-U}
    U(\vec{X})\approx \Omega_{\vec{X}}(t_k,t_{k-1})\Omega_{\vec{X}}(t_{k-1},t_{k-2})\cdots\Omega_{\vec{X}}(t_{1},t_{0}).
\end{align}
Here $-\beta/2=t_0<t_1<\ldots<t_k=\beta/2$ are evenly spaced points on the interval $[-\beta/2,\beta/2]$ and $\Omega_{\vec{X}}(t_{i},t_{i-1})$ are $2\times 2$ matrices with components
\begin{align}\label{approximate-parrallel-transport}
    \Omega_{\vec{X}}^{ab}(t_i,t_{i-1})=\ip{e^a(t_i,\vec{X}),e^b(t_{i-1},\vec{X})}.
\end{align}
These matrices describe orthogonal projection from the span of $\{e^1,e^2\}$ at $(t_{i-1},\vec{X})$ to the span at $(t_{i},\vec{X})$.  When $|t_i-t_{i-1}|$ is small this approximates the holonomy of the connection $\wt{A}$ between these points, and so \eqref{approximate-U} approximates the holonomy from $t=-\beta/2$ to $t=\beta/2$.

An advantage of using the formula \eqref{approximate-U}, rather than solving the differential equation \eqref{holonomy-cal-sky} directly, is that \eqref{approximate-U} respects gauge covariance.  If we make a gauge transformation $e^a(t,\vec{X})\mapsto e^b(t,\vec{X})g^{ba}(t,\vec{X})$ then $\Omega_{\vec{X}}$ and $U$ transform as
\[ \Omega_{\vec{X}}(t_i,t_{i-1})\mapsto g(t_i,\vec{X})^{-1}\Omega_{\vec{X}}(t_i,t_{i-1})g(t_{i-1},\vec{X}),\quad U(\vec{X}) \mapsto g(t_k,\vec{X})^{-1}U(\vec{X})g(t_0,\vec{X}). \]
In particular, $U$ is insensitive to choices of gauge at points $t_i$ between $\pm\beta/2$, so the gauge singularity at $t=0$ is of no consequence.

One caveat is that the matrix \eqref{approximate-U} is only approximately unitary.  To make it exactly unitary we multiply it on the right by $(U^\dagger U)^{-\frac{1}{2}}$ (where the square root is the canonical choice from the spectral decomposition).
\begin{table}[ht]
    \centering
    \begin{tabular}{c|cccc|c}
    $\lambda$&$\epsilon_0$&$\epsilon_1$&$\epsilon_2$&$\epsilon_3$&$|\QQ+1|$\\\hline
     0.1&0.004&4.046&21402.6&0.037&$3.4\times10^{-5}$\\
    0.2&0.072&8.090&10602.2&0.563&$4.2\times10^{-5}$\\
    0.3&0.349 &12.072&6918.37&2.526&$4.8\times10^{-5}$\\
    0.4&1.024&15.898 &4990.21&6.651&$5.4\times10^{-5}$\\
    0.5&2.276&19.461&3755.35&12.922&$6.0\times10^{-5}$\\
    0.6&4.213&22.670&2870.73&20.562&$6.6\times10^{-5}$\\
    0.7&6.849&25.457&2198.79&28.385&$7.2\times10^{-5}$\\
    0.8&10.089&27.791&1676.51&35.228&$7.8\times10^{-5}$\\
    0.9&13.759&29.672&1270.54&40.271&$8.6\times10^{-5}$\\
    1.0&17.635&31.127&959.301&43.163&$9.4\times10^{-5}$\\
    1.1&21.487&32.204&725.497&44.000&$1.1\times10^{-4}$\\
    1.2&25.118&32.968&553.905&43.152&$1.3\times10^{-4}$\\
    1.3&28.383&33.487&430.629&41.131&$1.6\times10^{-4}$\\
    1.4&31.208&33.831&343.509&38.448&$1.8\times10^{-4}$\\
    1.5&33.599&34.058&282.484&35.511&$2.2\times10^{-4}$\\
    1.6&35.653&34.218&239.719&32.568&$2.6\times10^{-4}$\\
    1.7&37.609&34.350&209.440&29.653&$3.0\times10^{-4}$\\
    1.8&39.999&34.480&187.592&26.458&$3.7\times10^{-4}$\\
    1.9&44.082&34.627&171.439&21.939&$4.6\times10^{-4}$\\
    2.0&53.114&34.813&159.200&13.043&$6.0\times10^{-4}$
    \end{tabular}
    \caption{The numerically evaluated coefficients of the energy function $E(x)$ for the KvBLL skyrmion with scale $0.1\leq\lambda\leq 2$. For this caloron $\epsilon_4=0$, and so it is omitted. The accuracy is measured by the quoted value of $|\QQ+1|$.}
    \label{tab:KvBLL-cal-energy-terms}
\end{table}

After calculating the KvBLL Skyrme fields, we calculated the various components of the Skyrme energy \eqref{Gauged-Skyrme-energy-x-vars}. In particular, so as to compare with models derived from Yang--Mills theory with couplings \eqref{YM-couplings-ints}, we consider the combinations
\begin{align}\label{energy-values}
\begin{array}{c}
\begin{array}{ccc}
    \epsilon_0=\norm{F^A}^2,&\epsilon_1=\norm{L^A}^2,&\epsilon_2=\norm{L^A\wedge L^A}^2,\end{array}\\
    \epsilon_3=\ip{F^A,U^{-1}F^AU-F^A-\frac{1}{2}(R^A\wedge R^A+L^A\wedge L^A)},\\
    \epsilon_4=\frac{1}{2}\ip{U^{-1}F^AU-F^A,L^A\wedge L^A}.
    \end{array}
\end{align}
The energy \eqref{Gauged-Skyrme-energy-x-vars}, with $x_5$ and $x_6$ fixed by \eqref{YM-x5-and-x6} with $\beta=1$, is then given by
\begin{align}\label{E(x)}
    E(x)=\epsilon_0+\epsilon_1x_1+\epsilon_2x_2+\epsilon_3x_3+\epsilon_4x_4.
\end{align}
In practice these integrals were evaluated numerically on a rectangular grid in $\R^3$.  In order to accommodate the dependence of the skyrmion size on $\lambda$, the dimensions of this box were taken to be $a\lambda\times a\lambda\times(a\lambda+\lambda^2/2)$ for a large positive constant $a$.  The values of $\epsilon_\mu$ obtained are given in table \ref{tab:KvBLL-cal-energy-terms}, along with the deviation $|\QQ+1|$ of the numerically evaluated topological charge from the true charge as a measure of accuracy.  Values of $\epsilon_\mu$ for intermediate values of $\lambda$ were calculated from these using polynomial interpolation.
\subsection{Numerical minimisation}
In the following sections we shall demonstrate that the KvBLL skyrmions provide good approximations to true energy minimisers. To measure how good this approximation is, we will find numerical minimisers of the energy and compare with the constructed KvBLL skyrmions, and in doing so we will demonstrate that both exhibit low BMIs when compared to other Skyrme models.

Specifically, we will evolve a discretised approximation of the system using a gradient descent method. The number of terms in the energy functional makes calculating the gradient a numerically demanding task compared with other models, limiting the density of lattice points. Due to this and the changing length scales of various fields, it will prove useful to perform the change of coordinates $X_i = \tan(Y_i)$ for the numerical solutions, where $X_i$ are the standard cartesian coordinates on $\mathbb{R}^3$ and $Y_i \in [-\pi/2, \pi/2]$.

The model was simulated on a regular three-dimensional grid of $N_1 \times N_2 \times N_3$ lattice sites with spacing $h_i = \pi/(N_i -1)$, such that $Y_i =n_i h_i-\pi/2$, $n_i \in [0,N_i-1]\cap\Z$.  The plots in this section were simulated with values $N_1 = N_2 = N_3 = 101$, with fixed boundary conditions.  We approximate the 1st and 2nd order spatial derivatives with respect to $Y_i$ using central 4th order finite difference operators, yielding a discrete approximation $E_{\rm dis}$ to the functional $E[\psi]$ in \eqref{expanded-energy}, where $\psi = (\phi, A)$. We therefore aim to minimise the function $E_{\rm dis} : \CC\lto \mathbb{R}$, where the discretised configuration space is the manifold $\CC = (S^3\times\su(2)^3)^{N_1 N_2 N_3} \subset (\mathbb{R}^4 \times \mathbb{R}^9)^{N_1 N_2 N_3}$, and we represent $\phi$ as a 4-vector subject to the constraint $\phi\cdot\phi = 1$.  To find local minima of $E_{\rm dis}$ we use an arrested Newton flow algorithm, solving for the motion of a particle in $\CC$ with potential $E_{\rm dis}$,
\begin{align}
\ddot{\psi} = - \mbox{grad} E_{\rm dis}(\psi),
\end{align}
starting at an initial configuration $\psi(0)$ and $\dot{\psi} = 0$.  Evolving this algorithm naturally flows the configuration towards a local minima. At each time step $t \mapsto t + \delta t$, we check to see if the direction of the force on the particle has reversed. If $\ddot{\psi}\cdot \dot\psi < 0$, we set $\dot\psi = 0$ and continue relaxing the configuration. The flow is terminated once the discrete approximate is sufficiently close to a local minima, namely when every component of $\mbox{grad} E_{\rm dis}(\psi)$ is zero within a given tolerance.

We already have an ideal initial condition in the form of the KvBLL skyrmion. Therefore we fix the boundary condition to be that of the KvBLL skyrmion without loss of generality and relax the constructed configuration.

\subsection{Optimising within the ultraspherical family}

Recall that the ultraspherical family is a one-parameter family of gauged Skyrme models, with couplings given in \eqref{Ultra-spherical-couplings-1} in terms of a parameter $\alpha>-\frac12$.  For a range of values of $\alpha$, we determined the value of the scale parameter $\lambda$ that minimises the BMI \eqref{BMI} of a KvBLL skyrmion.  The results are plotted in subfigure \ref{subfig:BMI-Ultraspherical-couplings-lambdamin}.
\begin{figure}[ht]
\centering
\begin{subfigure}{.5\textwidth}
\centering
\includegraphics[width=\linewidth]{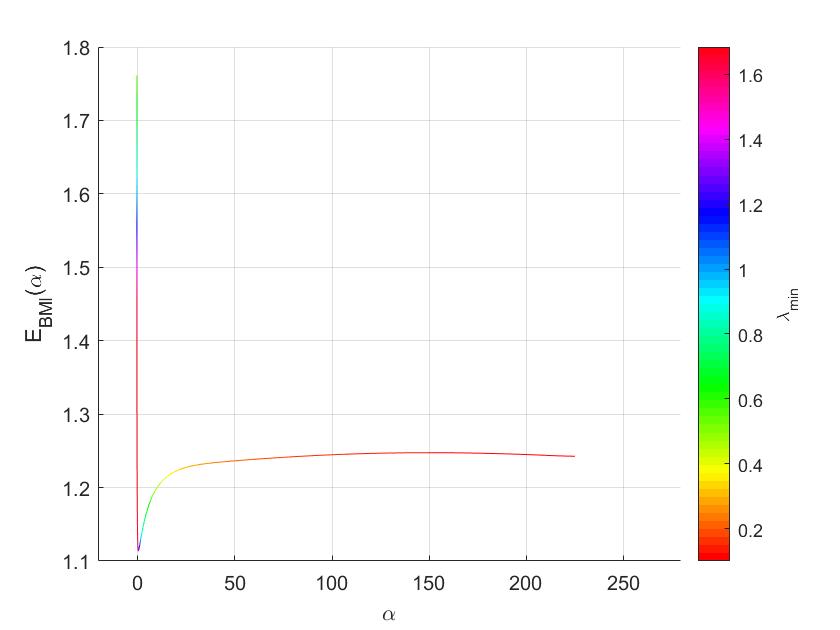}
\caption{}
\label{subfig:BMI-Ultraspherical-couplings-lambdamin}
\end{subfigure}%
\begin{subfigure}{.5\textwidth}
\centering
\includegraphics[width=\linewidth]{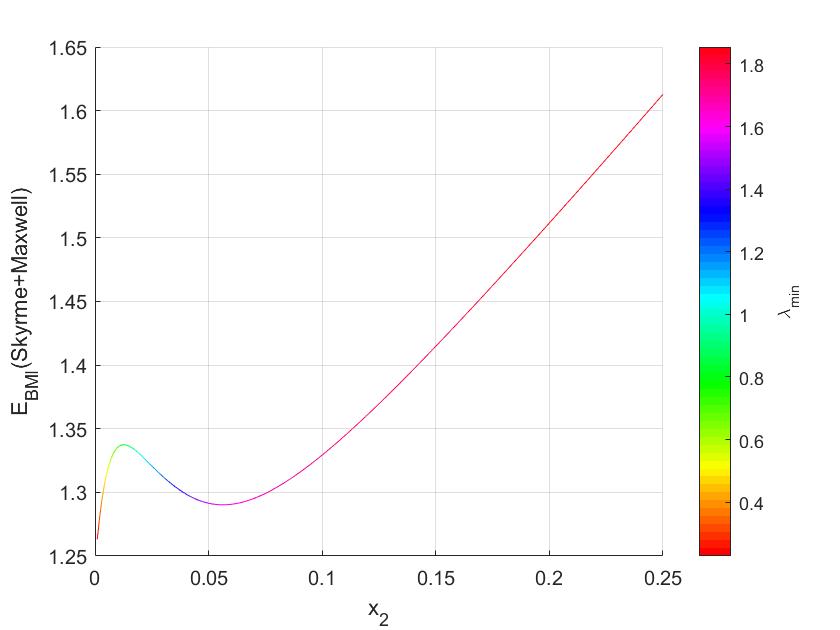}
\caption{}
\label{subfig:KvBLL-in-Maxwell}
\end{subfigure}
\caption{The BMI of the optimal KvBLL skyrmion in (a) the gauged model from the ultraspherical kink, for $\alpha>-\frac{1}{2}$, compared with (b) the traditional gauged Skyrme model $E=E_{\mathrm{Skyrme}}+E_{\mathrm{Maxwell}}$ as a function of the Skyrme coupling $x_2>0$.}
\label{fig.BMI-compare-different-models}
\end{figure}

The BMIs that we obtain compare well with other versions of the gauged Skyrme model.  When $\alpha$ is positive and close to zero (so that the Skyrme field and gauge field are strongly coupled) the BMI is close to $1.11$, so quite low.  For comparison, in the $\U(1)$-gauged model studied in \cite{PietteTchrakian2000static} BMIs at strong coupling were around $1.82$, so much higher.  At weak coupling ($\alpha\to\infty$) our BMIs are close to the value $1.23$ obtained in the ordinary ungauged Skyrme model. We attribute the low BMI of the KvBLL skyrmion to both the choice of coupling constants and the choice of boundary condition.
\subsubsection{Comparison with traditional model}
First, to show that the coupling constants are well-chosen for a low BMI, we have calculated the optimal BMI of the KvBLL skyrmions in the traditional gauged Skyrme model, which is given by the energy \eqref{E(x)} with $x_3=x_4=0$. We fixed $x_1=1$, which together with the already fixed $x_5=x_6=\beta=1$, amounts to a rescaling of energy and length units within this restricted class of models. We are left with one coupling parameter: the Skyrme coupling $x_2>0$. Up to this rescaling of units, varying $x_2$ is equivalent to varying the gauge coupling, with $x_2\to0$ and $x_2\to\infty$ measuring the weak and strong coupling limits respectively. The BMIs for the optimal KvBLL skyrmions are plotted in subfigure \ref{subfig:KvBLL-in-Maxwell}. As can be seen by comparing to subfigure \ref{subfig:BMI-Ultraspherical-couplings-lambdamin}, the model determined by the ultraspherical family exhibits significantly smaller BMIs at strong coupling, whereas the models are comparable at weak coupling as should be expected.
\subsubsection{Comparison with other boundary conditions}
To show that the boundary condition \eqref{boundary-condition} is well-chosen for a low BMI, we compare the BMI of the KvBLL calorons with two other types of caloron which were studied earlier in \cite{cork2018skyrmions}, namely the Harrington--Shepard calorons \cite{HarringtonShepard1978periodic} and the Prasad--Sommerfeld monopole \cite{PrasadSommerfield1975}.  Both have $|\QQ|=1$, but unlike the the KvBLL caloron they have spherical symmetry and satisfy the boundary condition $U\to\mathbb{1}$ as $r\to\infty$. The gauge field $A$ has different asymptotic behaviour in each case.
\begin{figure}[ht]
\centering
\begin{subfigure}{.5\textwidth}
\centering
\includegraphics[width=\linewidth]{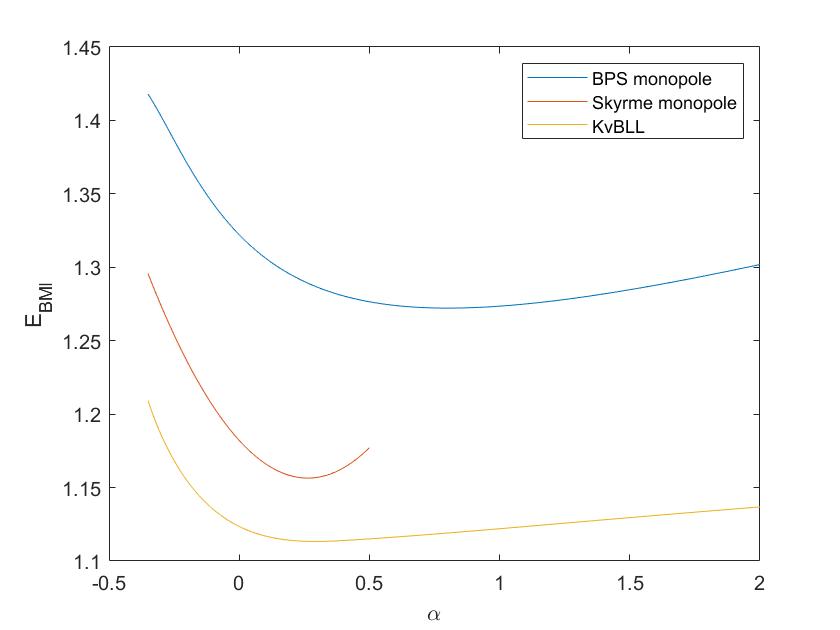}
\caption{}
\label{subfig.EBMI-compare-monopoles}
\end{subfigure}%
\begin{subfigure}{.5\textwidth}
\centering
\includegraphics[width=\linewidth]{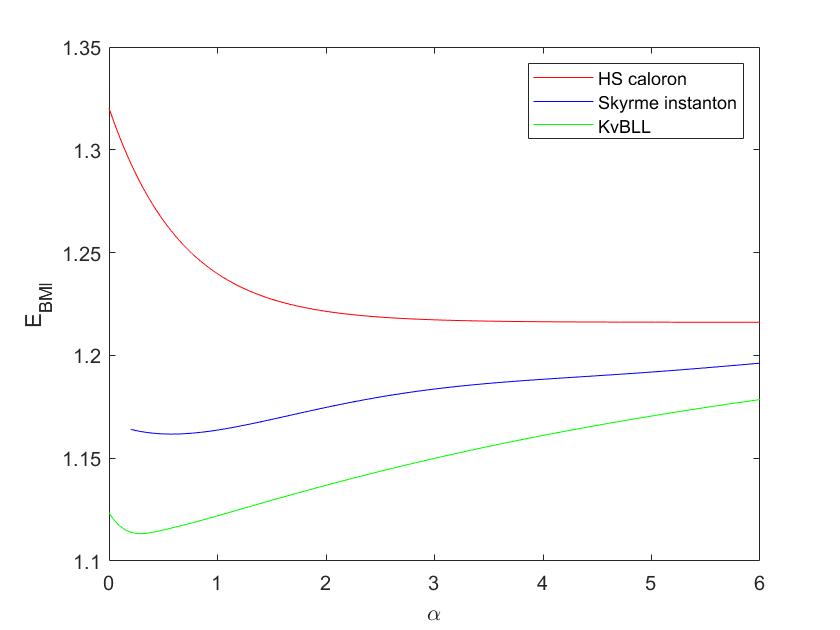}
\caption{}
\label{subfig.EBMI-compare-instantons}
\end{subfigure}
\caption{The BMI of the KvBLL skyrmions compared to other charge $|\QQ|=1$ gauged Skyrme fields -- minimisers and caloron approximations.}
\label{fig.BMI-compare-other-skyrmions}
\end{figure}
The BMIs are plotted in figure \ref{fig.BMI-compare-other-skyrmions}, along with numerical minimisers obtained in \cite{cork2018skyrmions} -- referred to as \textit{Skyrme-monopoles} and \textit{Skyrme-instantons} respectively. It is important to note in this comparison that in \cite{cork2018skyrmions} no numerical minimisers were found for certain values of $\alpha$ in each case, hence the obvious truncation in the plots. The main observation is that the KvBLL skyrmion has a lower BMI than all of these spherically-symmetric configurations, including the numerical solutions. This is surprising, because the KvBLL solution has less symmetry than these other configurations (axial rather than spherical).  In almost all known models that support topological solitons, the most symmetric solution of the field equations with topological charge 1 minimises the energy, so the gauged Skyrme model is very unusual.
\subsubsection{Numerical minimisers}
We have shown that by optimising over the moduli space we can find a KvBLL skyrmion that has a comparatively low BMI.  However,  as the configurations constructed from KvBLL calorons are not true solutions of the equations of motion, we will compare these with numerical solutions. Using the KvBLL skyrmions from the previous section as an initial condition and relaxing via Newton flow, we find a local minimiser of the gauged Skyrme model \eqref{Gauged-Skyrme-energy-x-vars}.  The local minimiser for $\alpha = 1$ is plotted in figure \ref{fig.skyrmePlot} and has a BMI of $E_{\rm BMI} = 1.104 $.

The colouring in the plot indicates the pion field direction; writing $U=-{\rm i}\sigma^3\exp({\rm i}\pi_j\sigma^j)$, the skyrmion is white/black when $\pi_3 = \pm 1$, and red, green, or blue when $\pi_1 + i \pi_2 = 1$, $\exp(2\pi i/3)$, or $\exp(4\pi i/3)$ respectively. We remark that, unlike the standard Skyrme model, where the choice of colouring only fixes a global iso-orientation, here the colouring scheme is not gauge-invariant. However, the fact that every colour appears once should reassure the reader that the topological charge is indeed 1. Finally, note that the degree 1 minimiser has only axial symmetry, as expected from the caloron model.

\begin{figure}[ht]
\centering
\includegraphics[width=0.5\linewidth]{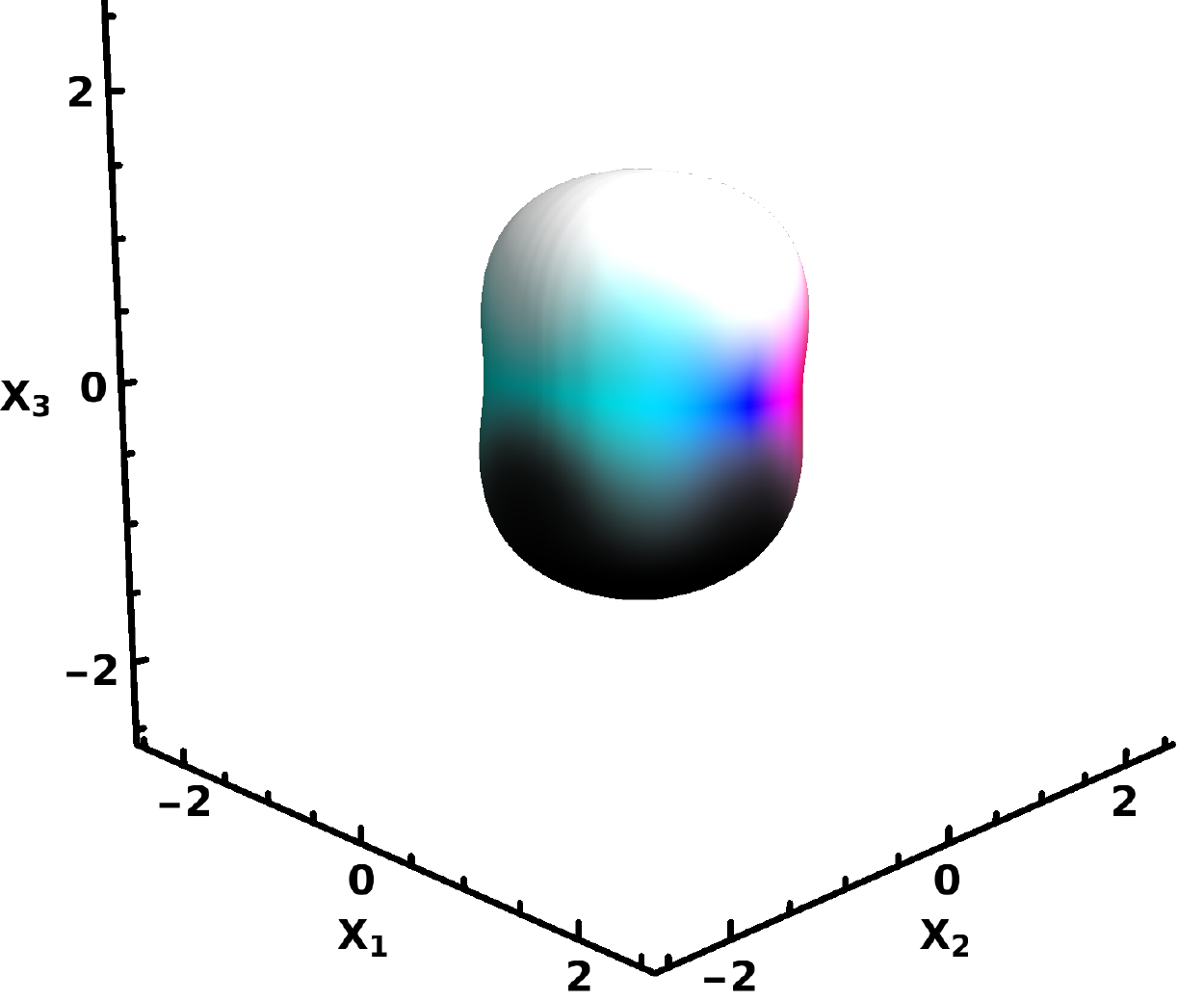}
\caption{Isosurface plot of the energy density for a local minimiser of the gauged Skyrme model. The model used ultraspherical couplings with $\alpha=1$. The plot is coloured by the direction of the fields $(\phi_1,\phi_2,\phi_3)$ and the isosurface is at the value $\EE = 0.5$. }
\label{fig.skyrmePlot}
\end{figure}

The BMI for $\alpha = 1$ is not much smaller than the approximation given by the caloron.  To see this more generally, we can compare the KvBLL skyrmions with the minimisers of the model in figure \ref{fig.energyComparePlot}. In this figure we have plotted the BMIs of the KvBLL skyrmions and the corresponding minimisers. It can be seen that the BMIs of the true solutions are not significantly below those of the KvBLL skyrmions. 

\begin{figure}[ht]
\centering
\includegraphics[width=0.6\linewidth]{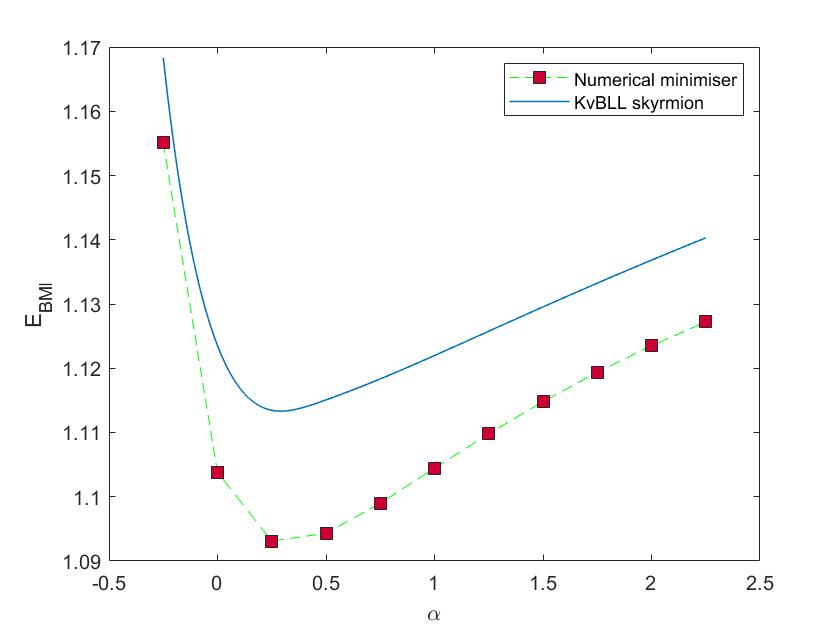}
\caption{The BMI of the true numerical minimisers (red dots) compared to the BMI of the KvBLL skyrmion (blue line) for varying $\alpha>-\frac{1}{2}$ in the ultraspherical model.}
\label{fig.energyComparePlot}
\end{figure}

\subsection{Optimising over all coupling constants}
So far we have seen that the ultraspherical family of gauged Skyrme models exhibits low BMIs at strong coupling, and the KvBLL calorons provide good approximations to numerical minima. We would like to know how good the BMI can get in more generality. To do this, we shall now flip the problem on its head; instead of fixing the couplings, and optimising fields to minimise the energy, we shall fix the caloron, and optimise the coupling constants to minimise the energy. Again we restrict to cases where the couplings are determined from Yang--Mills theory, which amounts to fixing $x_5$ and $x_6$ as \eqref{YM-x5-and-x6} so that the energy is given by \eqref{E(x)}.

For fixed $\epsilon_\mu$, the BMI \eqref{BMI} for either of the bounds \eqref{main-top-bound} and \eqref{restricted-top-bound} may be thought of as a function $E_{\mathrm{BMI}}:\ol{\DD}\lto[1,\infty)$, where here $\DD\subset\R^4$ is defined by the couplings $(x_1,x_2,x_3,x_4)$ satisfying the inequalities \eqref{YM-couplings-cond1,2}-\eqref{YM-couplings-cond4} (with $\beta=1$), where \eqref{YM-couplings-cond4} is replaced by an equality in the case of the bound \eqref{restricted-top-bound}, representing the bound restricted to $\bdy\DD$. We wish to minimise $E_{\mathrm{BMI}}$ for the KvBLL skyrmion, which is equivalent to minimising
\begin{align}\label{BMI-squared}
\left(24\pi^2E_{\mathrm{BMI}}\right)^2 = \left[ \epsilon_0 + \sum_{i=1}^3x_i\epsilon_i\right]^2 
\frac{1}{x_1}\left[9 + \frac{(1-2x_3)(1-3x_3)^2}{(1-2x_3)(4x_2-x_3^2)- x_4^2}\right]
\end{align}
with respect to $x_1,\ldots,x_4$, subject to the constraints \eqref{YM-couplings-cond1,2}-\eqref{YM-couplings-cond4}. Note that we have set $\epsilon_4=0$ as this is the case for the KvBLL skyrmion. There is an exact solution to this problem which determines \textit{optimal couplings} in terms of $\epsilon_0,\dots,\epsilon_3$.
\begin{proposition}\label{prop:optimal-couplings}
Suppose that $(x_1,\dots, x_4)\in\R^4$ satisfy $x_1,x_2>0$,
\begin{align}
    4x_2-x_3^2\geq0,\label{YM-constraint-3-relaxed}
\end{align}
and\footnote{Note that these are conditions \eqref{YM-couplings-cond3}-\eqref{YM-couplings-cond4} relaxed to allow for equality.}
\begin{align}
    (4x_2-x_3^2)(1-2x_3)-x_4^2\geq 0.\label{YM-constraint-4-relaxed}
\end{align}
Let $\epsilon_0,\dots,\epsilon_3>0$ be positive constants such that $\epsilon_0\epsilon_2>\epsilon_3^2$. Then \eqref{BMI-squared} satisfies
\begin{align}\label{energy-bound-optimal}
\left(24\pi^2E_{\mathrm{BMI}}\right)^2 \geq \epsilon_1(\epsilon_2+12\epsilon_3+12\epsilon_0),
\end{align}
with equality if and only if
\begin{align}\label{optimal-couplings}
x_1=\frac{1}{\epsilon_1}\left[\epsilon_0+\sum_{i=2}^3x_i\epsilon_i\right],\quad x_2=\frac{1}{36},\quad x_3=\frac13,\quad x_4=0.
\end{align}
\end{proposition}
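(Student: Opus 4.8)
The plan is to minimise \eqref{BMI-squared} by peeling off the four couplings one at a time, exploiting that its two bracketed factors involve disjoint variables. Set $A=\epsilon_0+x_2\epsilon_2+x_3\epsilon_3>0$, so that $\epsilon_0+\sum_{i=1}^{3}x_i\epsilon_i=A+x_1\epsilon_1$, and write $g$ for the second factor $9+(1-2x_3)(1-3x_3)^2/[(1-2x_3)(4x_2-x_3^2)-x_4^2]$, which does not involve $x_1$. Since $g\ge 9>0$, I would first minimise $(A+x_1\epsilon_1)^2/x_1$ over $x_1>0$; by AM--GM (or a one-line derivative) the minimiser is $x_1=A/\epsilon_1$, with value $4\epsilon_1 A$. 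This already fixes the first entry of \eqref{optimal-couplings} and reduces the task to minimising $4\epsilon_1 A\,g$ over $(x_2,x_3,x_4)$. The coupling $x_4$ enters only through $-x_4^2$ in the denominator of the fraction in $g$, whose numerator $(1-2x_3)(1-3x_3)^2$ is non-negative because \eqref{YM-constraint-3-relaxed}--\eqref{YM-constraint-4-relaxed} force $1-2x_3\ge 0$; hence $g$ is smallest at $x_4=0$, giving the last entry of \eqref{optimal-couplings} and leaving a two-variable problem.

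For the reduced problem I would substitute $s=4x_2-x_3^2\ge 0$, which cleanly separates the $s$-dependence. Writing $a(x_3)=\tfrac{\epsilon_2}{4}x_3^2+\epsilon_3 x_3+\epsilon_0$ one has $A=a+\tfrac{\epsilon_2}{4}s$ and $g=9+(1-3x_3)^2/s$, so
\[
4Ag=36a+\epsilon_2(1-3x_3)^2+\frac{4a(1-3x_3)^2}{s}+9\epsilon_2 s .
\]
Applying AM--GM to the last two terms gives $4a(1-3x_3)^2/s+9\epsilon_2 s\ge 12\,|1-3x_3|\sqrt{a\epsilon_2}$, with equality pinning $s$ (and $s\to 0$, i.e.\ $x_2\to\tfrac1{36}$, precisely when $1-3x_3=0$). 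Subtracting the extremal value from $4Ag$, the polynomial part carries a factor $(3x_3-1)$, leaving $(3x_3-1)(6\epsilon_2 x_3+12\epsilon_3)$, and the entire minimisation collapses to the single scalar inequality
\[
(3x_3-1)(6\epsilon_2 x_3+12\epsilon_3)+12\,|3x_3-1|\sqrt{a\epsilon_2}\ge 0 .
\]

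The crux is this last inequality, and it is exactly here that the hypothesis $\epsilon_0\epsilon_2>\epsilon_3^2$ is decisive. For $3x_3-1\ge 0$ every term is non-negative; for $3x_3-1<0$ the inequality is equivalent to $\sqrt{a\epsilon_2}\ge \tfrac{\epsilon_2}{2}x_3+\epsilon_3$, which follows from the completing-the-square identity
\[
a\epsilon_2=\Big(\tfrac{\epsilon_2}{2}x_3+\epsilon_3\Big)^{2}+\big(\epsilon_0\epsilon_2-\epsilon_3^2\big),
\]
since the second bracket is positive. Tracing equality back through the chain -- AM--GM in $x_1$, then $x_4=0$, then AM--GM in $s$, and finally the scalar inequality, which is strict off $x_3=\tfrac13$ precisely because $\epsilon_0\epsilon_2-\epsilon_3^2>0$ -- shows that the infimum is attained only at \eqref{optimal-couplings}, and substituting those couplings back into \eqref{BMI-squared} yields the value on the right of \eqref{energy-bound-optimal}. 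The one subtlety to handle with care is that the optimiser lies on $\partial\DD$, where both the numerator and denominator of the fraction in \eqref{BMI-squared} vanish; the evaluation there must therefore be carried out as the limit computed above rather than by naive substitution into the indeterminate $0/0$.
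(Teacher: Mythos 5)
Your proposal is correct, and at the structural level it is the same argument as the paper's: you eliminate the couplings sequentially ($x_1$, then $x_4$, then $x_2$ via $s=4x_2-x_3^2$, then $x_3$), and your two AM--GM steps are precisely the paper's algebraic identities \eqref{i2} and \eqref{i3}, while your monotonicity-in-$x_4^2$ observation is its identity \eqref{i1} (the paper does $x_4$ first, then $x_1$, then $x_2$; the order is immaterial). The one genuinely different ingredient is the final one-variable step. The paper observes that $\sqrt{\epsilon_2}\,|1-3x_3|+3\sqrt{\epsilon_2x_3^2+4\epsilon_3x_3+4\epsilon_0}$ is a sum of two convex functions and checks that the one-sided derivatives at the kink $x_3=\tfrac13$ change sign exactly when $\epsilon_0\epsilon_2>\epsilon_3^2$, relying on convexity to promote the local minimum to a global one. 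You instead subtract the candidate optimal value, factor out $(3x_3-1)$, and settle the residual scalar inequality via the identity $a\epsilon_2=\bigl(\tfrac{\epsilon_2}{2}x_3+\epsilon_3\bigr)^2+(\epsilon_0\epsilon_2-\epsilon_3^2)$ --- a more elementary and manifestly global argument, in which the hypothesis $\epsilon_0\epsilon_2>\epsilon_3^2$ enters at exactly the same point and delivers strictness off $x_3=\tfrac13$; I have checked the factorisation $(3x_3-1)(6\epsilon_2x_3+12\epsilon_3)$ and it is correct. Your closing remark about the $0/0$ at the optimiser on $\bdy\DD$ is apt and slightly more careful than the paper, which only notes beforehand that \eqref{optimal-couplings} lies on the boundary where the bound \eqref{restricted-top-bound} takes over (the same degeneracy also underlies your claim that the constraints force $1-2x_3\geq 0$, which strictly requires $4x_2-x_3^2>0$; the paper glosses the identical point).

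One caution: carried to completion, your evaluation gives $\epsilon_1\cdot 36\,a(\tfrac13)=\epsilon_1(\epsilon_2+12\epsilon_3+36\epsilon_0)$, and the paper's own \eqref{min124} at $x_3=\tfrac13$ gives the same, as does direct evaluation via the boundary bound \eqref{restricted-top-bound2} with $x_1=A/\epsilon_1$. So the coefficient $12\epsilon_0$ in \eqref{energy-bound-optimal} is evidently a typo for $36\epsilon_0$. Your final sentence asserts that substitution ``yields the value on the right of \eqref{energy-bound-optimal}'' without displaying the number, which silently inherits that typo; had you written out $36\,a(\tfrac13)$ explicitly, your argument would have exposed it.
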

Before we prove Proposition \ref{prop:optimal-couplings}, note that the couplings \eqref{optimal-couplings} lie on the boundary $\bdy\DD$ determined by the condition \eqref{boundary-equation}. Furthermore, the equality in \eqref{energy-bound-optimal} is consistent with the bound \eqref{restricted-top-bound} found previously. The only thing that is not obvious is that the energy values \eqref{energy-values} satisfy the constraint $\epsilon_0\epsilon_2-\epsilon_3^2>0$, however this can be verified from the data in table \ref{tab:KvBLL-cal-energy-terms}, and via the plot given in figure \ref{fig:epsilon-constraint}, which includes the values obtained via interpolation. We remark that this quantity appears only to tend to $0$ as $\lambda\to0$.\\
\begin{figure}[ht]
    \centering
    \includegraphics[width=0.6\linewidth]{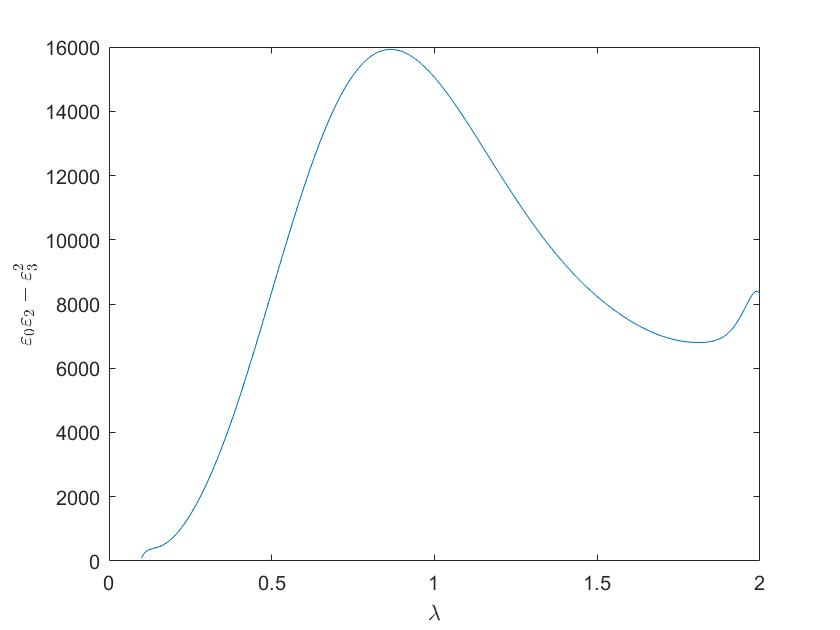}
    \caption{Testing the constraint $\epsilon_0\epsilon_2>\epsilon_3^2$ for the KvBLL skyrmion.}
    \label{fig:epsilon-constraint}
\end{figure}

\noindent\textit{Proof of Proposition \ref{prop:optimal-couplings}}. To minimise \eqref{BMI-squared}, we make use of the following identities.
\begin{align}
\frac{a}{b-x^2} &= \frac{a}{b}\left(1 + \frac{x^2}{b-x^2}\right),\label{i1}\\
\frac{1}{x}(ax+b)^2 &= \frac{1}{x}(ax-b)^2 + 4ab,\label{i2}\\
\frac{1}{x}(ax+b)(cx+d) &= \frac{1}{x}\left(\sqrt{ac}x-\sqrt{bd}\right)^2 + \left(\sqrt{ad}+\sqrt{cd}\right)^2.\label{i3}
\end{align}
First, using \eqref{i1} with $x=x_4$, $a=(1-2x_3)(1-3x_3^2)$, and $b=(1-2x_3)(4x_2-x_3^2)$, and noting that \eqref{YM-constraint-4-relaxed} means $b-x^2\geq0$, gives
\begin{align}
\left(24\pi^2E_{\mathrm{BMI}}\right)^2 \geq \left[\epsilon_0+\sum_{i=1}^3x_i\epsilon_i\right]\left[9 + \frac{(1-3x_3)^2}{4x_2-x_3^2}\right],
\end{align}
with equality if and only if\footnote{This is clearly expected as the energy $E(x)$ doesn't depend on $x_4$ since $\epsilon_4=0$.}
\begin{align}
    x_4=0.\label{x4}
\end{align}
Next, using \eqref{i2} with $x=x_1>0$, $a=\epsilon_1$, and $b=\epsilon_0+x_2\epsilon_2+x_3\epsilon_3$ gives
\begin{align}\label{b2}
\left(24\pi^2E_{\mathrm{BMI}}\right)^2 \geq 4\epsilon_1\left[\epsilon_0+\sum_{i=2}^3x_i\epsilon_i\right]\left[9 + \frac{(1-3x_3)^2}{4x_2-x_3^2}\right],
\end{align}
with equality if and only if \eqref{x4} and\footnote{The condition \eqref{x1} is precisely the virial theorem one obtains by the usual Derrick scaling argument.}
\begin{align}
x_1=\frac{1}{\epsilon_1}\left[\epsilon_0+\sum_{i=2}^3x_i\epsilon_i\right].\label{x1}
\end{align}
Finally, using \eqref{i3} on the right-hand-side of \eqref{b2} with $x=4x_2-x_3^2$, $a=\epsilon_2$, $b=(1-3x_3)^2$, $c=9$, and $d=\epsilon_2x_3^2+4\epsilon_3x_3+4\epsilon_0$, and noting \eqref{YM-constraint-3-relaxed} means $x\geq0$, we obtain
\begin{align}
\left(24\pi^2E_{\mathrm{BMI}}\right)^2 \geq \epsilon_1 \left(\sqrt{\epsilon_2}|1-3x_3| + 3\sqrt{\epsilon_2x_3^2+4\epsilon_3x_3+4\epsilon_0}\right)^2,\label{min124}
\end{align}
with equality subject to \eqref{x4}, \eqref{x1}, and 
\begin{align}
4x_2-x_3^2 = \frac{|1-3x_3|\sqrt{\epsilon_2x_3^2+4\epsilon_3x_3+4\epsilon_0}}{3\sqrt{\epsilon_2}}, \label{x2}
\end{align}
which provides a condition for $x_2$. It remains to optimise the right hand side of \eqref{min124} with respect to $x_3$. The expression inside the brackets is the sum of two convex functions of $x_3$, so has at most one minimum. The derivative of this expression is discontinuous at $x_3=1/3$, and this point will be a local minimum (and hence the global minimum) if and only if the derivative changes sign at this point.  We find that
\begin{align}
\lim_{x_3\to \frac{1}{3}^\pm} \frac{\d\;}{\d x_3}\left(\sqrt{\epsilon_2}|1-3x_3| + 3\sqrt{\epsilon_2x_3^2+4\epsilon_3x_3+4\epsilon_0}\right)= \frac{3(\epsilon_2+6\epsilon_3)}{\sqrt{\epsilon_2+12\epsilon_3+36\epsilon_0}} \pm 3\sqrt{\epsilon_2} 
\end{align}
So this function has a local minimum at $x_3=\frac13$ if and only if
\begin{align}
\sqrt{\epsilon_2} > \frac{\epsilon_2+6\epsilon_3}{\sqrt{\epsilon_2+12\epsilon_3+36\epsilon_0}}.
\end{align}
This condition is equivalent to
\begin{align}
\epsilon_0\epsilon_2> \epsilon_3^2.
\end{align}
Plugging $x_3=1/3$ into \eqref{min124} yields the inequality \eqref{energy-bound-optimal}, and the parameters \eqref{optimal-couplings} follow from this and \eqref{x4}, \eqref{x1}, and \eqref{x2}.\hfill$\square$\\
\subsubsection{Optimal BMIs for KvBLL skyrmions}
Using the data obtained by interpolating table \ref{tab:KvBLL-cal-energy-terms}, we may determine the BMI for the optimal couplings \eqref{optimal-couplings} as a function of the KvBLL scale parameter $\lambda$. We may similarly determine, for fixed $\lambda$, which value of the ultraspherical parameter $\alpha$ optimises $E_{\mathrm{BMI}}$. A comparison of the BMI for the ultraspherical family\footnote{We have coloured with respect to $\log\alpha$ as the colouring is less clear for $\alpha$. This is also well-defined as the ideal ultraspherical parameter is always positive.}, and the optimal couplings is given in figure \ref{fig:BMI-ultraspherical-vs-optimal}.
\begin{figure}[ht]
    \centering
    \includegraphics[width=0.6\linewidth]{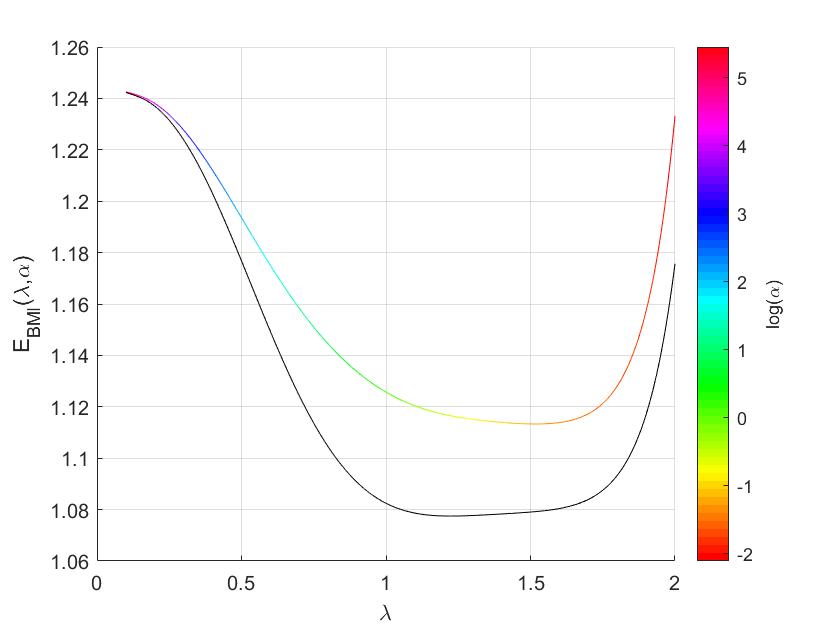}
    \caption{The BMI as a function of $\lambda$ for the ideal ultraspherical parameter (coloured), against the optimal couplings (black).}
    \label{fig:BMI-ultraspherical-vs-optimal}
\end{figure}
A few observations can be made from this. The first thing to notice is that the shape of the two curves are similar, suggesting that the ultraspherical couplings provide a reasonable approximation to an optimal near BPS model. It is important to note that whilst the curves are very close near $\lambda=0$, this is simply an artifact of the fact that near $\lambda=0$, the KvBLL caloron looks like an ordinary instanton on $\R^4$, and can be considered as the weak-coupling limit, i.e. we are considering something which is more like an ungauged skyrmion, and hence there are no preferred couplings. The next observation is that there are versions of the energy \eqref{Gauged-Skyrme-energy-x-vars} for which the KvBLL skyrmion (at certain scales) has a very low BMI. The minimal value obtained for the optimal couplings is $E_{\mathrm{BMI}}=1.078$, for $\lambda=1.223$. The BMI in the ultraspherical model for the same $\lambda$ was obtained at $\alpha=0.55$, with $E_{\mathrm{BMI}}(\alpha)=1.117$, which is approximately $3.7\%$ above the optimum. Both of these values should be considered as being at strong-coupling, and are both significant improvements on the equivalent BMIs in the traditional model.
\subsubsection{Numerical minimisation in the optimal model}
We now consider the numerical minimiser for the optimal couplings calculated in the previous section. In particular,  we consider the caloron corresponding to $\lambda = 1.223$,  where we find from \eqref{optimal-couplings} the optimal set of couplings to be
\begin{align}
x_1 = 1.65084, \quad x_2 = \frac{1}{36}, \quad x_3 = \frac{1}{3}, \quad x_4 = 0, \quad x_5 = \frac{1}{3}, \quad x_6 = 1.
\end{align}
We then use the KvBLL skyrmion considered in the previous section for $\lambda = 1.223$ as an initial condition and numerically minimise to find the local minimum displayed in figure \ref{fig.optimalSkyrmePlot}. This numerical solution has
$E_{\rm BMI}=1.067$. This is even lower than the BMI for the KvBLL skyrmion above, although the two values are extremely close, with the KvBLL BMI only $1\%$ above the true minimum.  This is a very low BMI and no other gauged Skyrme model has come close to such a small value.
\begin{figure}[ht]
\centering
\includegraphics[width=0.5\linewidth]{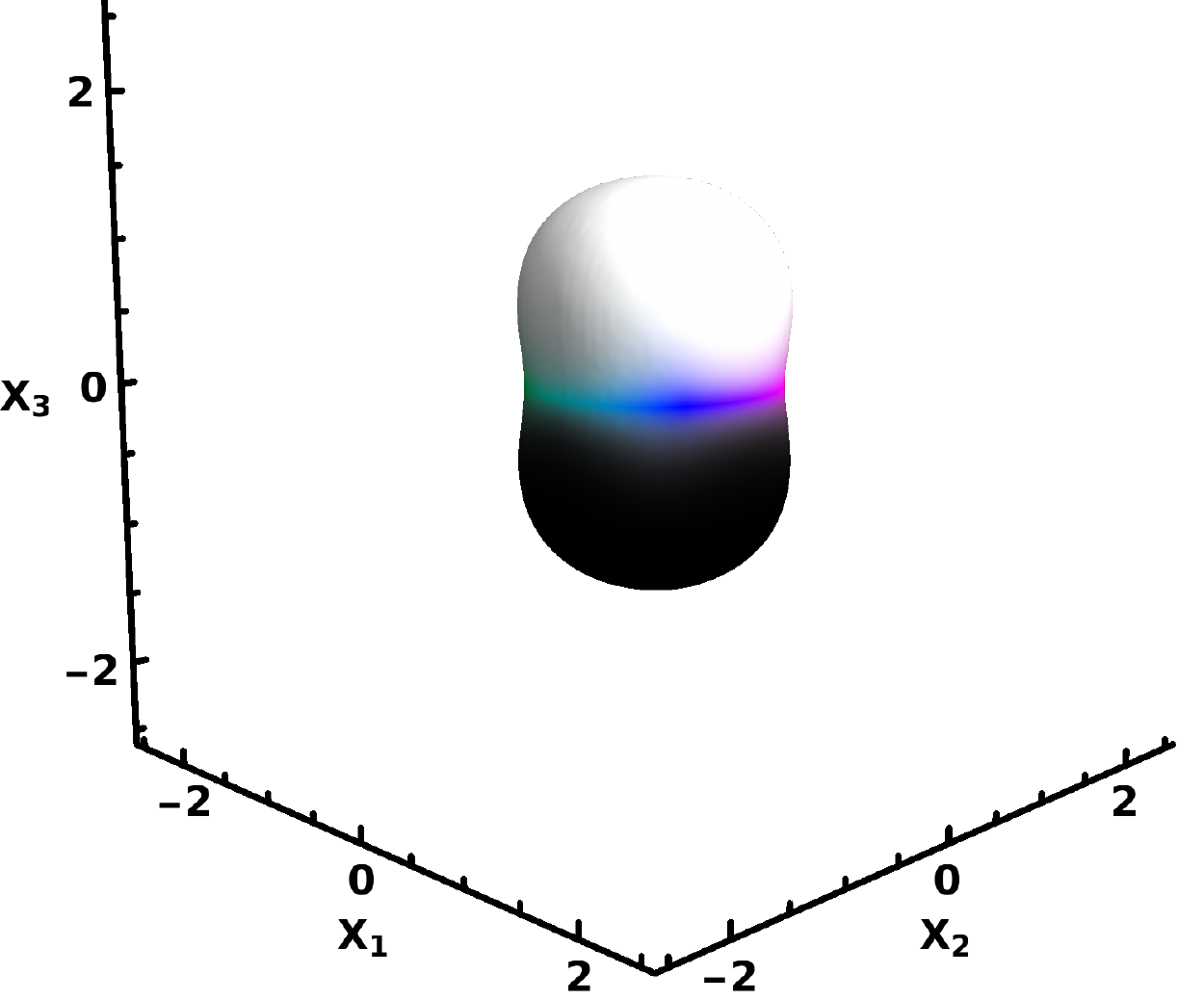}
\caption{Isosurface plot of the energy density for a local minimiser of the gauged Skyrme model. The model used optimal couplings, calculated in the previous section. The plot is coloured by the direction of the pion fields and the isosurface is at the value $\EE = 2.0$. }
\label{fig.optimalSkyrmePlot}
\end{figure}

\section{Dipole moments}\label{sec:dipole}
We saw in the previous section that the skyrmions in the ultraspherical family of gauged Skyrmions that satisfy the boundary condition \eqref{boundary-condition} have low energies and low BMIs.  This boundary condition breaks the gauge group to $\U(1)$.  In this section we explore the electromagnetic properties of these skyrmions.  The interpretation of the KvBLL caloron as a pair of two monopoles \cite{kraanVanBaal1998monopoleconsts} will allow us to estimate the magnetic dipole moment of the nucleon in our model.

\subsection{Abelianisation}
First we explain how to interpret the asymptotic fields of a skyrmion in terms of pions and photons.  Recall from \eqref{boundary-condition} that $U=-{\rm i}\sigma^3$ at $\infty$.  In order to interpret $U$ in terms of pions, we therefore write $U$ in terms of pion fields $\pi_1,\pi_2,\pi_3$ as follows:
\begin{align}
U(\vec{X}) = -{\rm i}\sigma^3\exp({\rm i} \pi_j(\vec{X})\sigma^j).\label{pion-field-skyrme}
\end{align}
With this identification, the $\SU(2)$ gauge group corresponds to the non-diagonal subgroup of $\SU(2)_L\times \SU(2)_R$ generated by $(-{\rm i}\sigma^1,{\rm i}\sigma^1)$, $(-{\rm i}\sigma^2,{\rm i}\sigma^2)$ and $({\rm i}\sigma^3,{\rm i}\sigma^3)$.  This contains a mixture of both vector and axial transformations, but the boundary condition $\pi_j=0$ breaks the gauge symmetry to $\U(1)_V$.

Near spatial infinity, we write the fields $(U,A)$ as
\begin{align}
U &\sim -{\rm i}\sigma^3(\mathbb{1}+{\rm i} \pi_j\sigma^j) =-{\rm i}\sigma^3+\pi_3\mathbb{1}-\pi_+\sigma^- + \pi_-\sigma^+,\label{Asymptotic-Skyrme-field}\\
 A &\sim -\frac{{\rm i} e}{2\hbar}a\sigma^3-{\rm i} W\sigma^--{\rm i}\overline{W}\sigma^+\label{Asymptotic-gauge-field}
\end{align}
with $\pi_\pm=\pi_1\pm{\rm i}\pi_2$, $\pi_3,a,W$ small, and $\sigma^{\pm}=\frac{1}{2}(\sigma^1\pm{\rm i}\sigma^2)$. Thus to leading order, the Dirichlet energy density is
\begin{multline}\label{Dirichlet}
    |L^A|^2\sim2\left(\d\pi_3-{\rm i}\ol{W}\pi_++{\rm i} W\pi_-\right)\wedge\star\left(\d\pi_3-{\rm i}\ol{W}\pi_++{\rm i} W\pi_-\right)\\+2\left(2W+{\rm i}\dfrac{e}{\hbar}a\pi_++\d\pi_+\right)\wedge\star\left(2\ol{W}-{\rm i}\dfrac{e}{\hbar}a\pi_-+\d\pi_-\right).
\end{multline}
The second line of \eqref{Dirichlet} acts as a mass term for $W$ in the energy \eqref{Gauged-Skyrme-energy-x-vars}, forcing $W$ to decay exponentially. Thus, we can ignore $W$ from now on. The remaining terms contribute to the leading order expansion of the energy \eqref{Gauged-Skyrme-energy-x-vars}, which takes the form
\begin{align}
    E\sim\int_{\R^3}2x_1\left(\d\pi_3\wedge\star\d\pi_3+\d_a\pi_+\wedge\star\d_a\pi_-\right)+x_6\dfrac{e^2}{2\hbar^2}\d a\wedge\star\d a,
\end{align}
where $\d_a\pi_{\pm}=\d\pi_\pm\pm{\rm i}\frac{e}{\hbar}a\pi_\pm$. This describes the energy of massless pion fields coupled to an electromagnetic potential $a$. The coupling constants should therefore \cite{MantonSutcliffe2004,PietteTchrakian2000static,AdkinsNappiWitten1983static} be identified with the following physical quantities:\footnote{We use Heaviside--Lorentz units with the speed of light set to $c=1$, so $\varepsilon_0=1$ and the dimensionless fine structure constant is $\alpha=e^2/4\pi\hbar$.}
\begin{align}\label{calibration-asymptotically}
x_1=\frac{F_\pi^2}{16\hbar},\quad x_6=\frac{\hbar^2}{e^2}=\frac{\hbar}{4\pi\alpha}.
\end{align}
In addition, by standard convention \cite{MantonSutcliffe2004}, the coefficient of the Skyrme term in \eqref{Gauged-Skyrme-energy-x-vars} is written
\begin{align}\label{x2-Manton-Sutcliffe}
x_2 = \frac{\hbar}{16g^2}
\end{align}
for a dimensionless constant $g$.
\subsection{The dipole moment of the KvBLL skyrmion}
In a suitable gauge, the KvBLL caloron has asymptotics\footnote{See appendix \ref{appendix:KvBLL} for a more detailed discussion.}
\begin{align}
A_t &\sim {\rm i}\left(\frac{\pi}{2\beta}+\frac{1}{2r_2}-\frac{1}{2r_1}\right)\sigma^3
\end{align}
where $\sim$ denotes equality up to exponentially decaying terms and
\begin{align}
r_1=\sqrt{X_1^2+X_2^2+(X_3-\lambda^2/4)^2},\quad r_2=\sqrt{X_1^2+X_2^2+(X_3+\lambda^2/4)^2}.
\end{align}
It follows from the anti-self-duality equation, and comparison with \eqref{Asymptotic-gauge-field} that the asymptotic magnetic field is
\begin{align}
\star\d a \sim \frac{\hbar}{e}\d\left(\frac{1}{r_1}-\frac{1}{r_2}\right),
\end{align}
which takes the form of two oppositely charged magnetic monopoles. This exhibits a magnetic dipole moment of magnitude
\begin{align}
\mu = \frac{2\pi\hbar\lambda^2}{e}.
\end{align}
As a consequence, the KvBLL skyrmion has a dipole moment, and we seek to compare this with the dipole moments of the neutron and proton.

Traditionally the magnetic dipole moment $\mu$ of a nucleon is expressed in terms of the nuclear magneton $\mu_N=e\hbar/2M_N$, with $M_N$ being the nucleon mass.  So we are interested in the physical quantity
\begin{align}
\frac{\mu}{\mu_N} = \frac{4\pi\lambda^2M_N}{e^2}.
\end{align}
We need to express this in ``Skyrme language''.  In the Adkins--Nappi--Witten calibration \cite{AdkinsNappiWitten1983static},
\begin{align}
\frac{M_N}{F_\pi}=\frac{939MeV}{129MeV}\approx7.28.
\end{align}
Also, using \eqref{calibration-asymptotically}-\eqref{x2-Manton-Sutcliffe} we can express $F_\pi/e^2$ in terms of our couplings:
\begin{align}
\frac{F_\pi}{e^2}=\frac{\hbar^2}{e^2}\frac{F_\pi}{\sqrt{\hbar}}\frac{1}{\hbar^{\frac32}} = \frac{x_6}{16g^3}\sqrt{\frac{x_1}{x_2^3}}.
\end{align}
Using the Adkins--Nappi-Witten value $g=5.45$ we hence obtain
\begin{align}
\frac{\mu}{\mu_N} = \frac{7.28\times \pi}{4\times 5.45^3}x_6\lambda^2\sqrt{\frac{x_1}{x_2^3}}.
\end{align}

Now we estimate the quantities that enter this expression. Since we assume the coupling of the electromagnetic field to the Skyrme field is weak, we can assume that $\beta\gg1$. In line with prior motivations, we shall assume that the couplings are defined by \eqref{YM-couplings-ints} by a family of functions $\varphi_+$, parameterised by $\beta$, such that
\begin{align}
\lim_{\beta\to\infty}\varphi_+(t) = \frac{1}{2}\left(1+\operatorname{erf}\left(\frac{t}{\sqrt{2}}\right)\right).
\end{align}
For example, we know that such a family is given by the ultraspherical kink \eqref{hypergeometric-fct} when $\alpha=\beta^2/8$. Thus we may use the limiting values
\begin{align}
\begin{aligned}
x_1 &\approx \int_{-\infty}^\infty \frac{1}{2\pi}e^{-t^2}\;\d t = \frac{1}{2\sqrt{\pi}},\\
x_2 &\approx \int_{-\infty}^\infty \left(\frac{1}{4}-\frac{1}{4}\mathrm{erf}\left(\frac{t}{\sqrt{2}}\right)\right)^2\;\d t \approx 0.0990.
\end{aligned}
\end{align}
We also need to estimate the scale parameter $\lambda$. In the weak coupling limit the optimal caloron is expected to be instanton-like.  One can show that \cite{LeeLu1998} in this limit the caloron resembles a  CF'tH instanton \cite{CorriganFairlie1977} with pre-potential
\begin{align}
\phi = 1 + \frac{\lambda^2}{\mu_0|x|^2}.
\end{align}
Working in units where $x_1=x_2=\frac12$, Atiyah and Manton \cite{AtiyahManton1989} estimated the optimum scale of this instanton, when approximating the ordinary $1$-skyrmion, to be $\lambda^2/\mu_0\approx 2.109$.  So we know that
\begin{align}
\frac{\lambda^2}{\mu_0}\frac{x_1}{x_2}\approx 2.109\implies \lambda^2 \approx \frac{2.109\times 2\pi}{\beta}\frac{x_2}{x_1}.
\end{align}
Since we also know that $x_6=\beta$, we finally obtain
\begin{align}
\frac{\mu}{\mu_N} \approx \frac{7.28\times 2.109\times \pi^2}{2\times 5.45^3\times \sqrt{x_1x_2}} \approx \frac{7.28\times 2.109\times \pi^{9/4}}{5.45^3 \times\sqrt{2\times 0.0990}} \approx 2.80.
\end{align}
This compares well with the experimental values of 2.79 for the proton and 1.91 for the neutron. This suggests that the KvBLL caloron, and the gauged skyrmions it approximates, provide a realistic approximation (within the realms of general Skyrme model accuracy) to real nuclei.

Although it didn't enter the calculation, we can estimate the period $\beta$.  The fine structure constant\footnote{This notation should not be confused with the ultraspherical parameter, which is also called $\alpha$.} $\alpha$ should be close to its physical value of $\frac{1}{137}$, so
\begin{align}
\label{cc}
\frac{x_6}{x_2}=\frac{4g^2}{\pi\alpha}\approx \frac{4\times 5.45^2\times 137}{\pi}.
\end{align}
Using $x_6=\beta$ and $x_2\approx 0.0990$ we obtain
\begin{align}
\beta \approx \frac{4\times 5.45^2\times 137\times 0.0990}{\pi} \approx 513.
\end{align}
To reproduce this in the ultraspherical family, we would set the ultraspherical parameter as $\alpha=\beta^2/8\approx32.9\times 10^2$.

Although we haven't done so, one could go on to compute further properties of nucleons by quantising the gauged 1-skyrmion, following the methods of \cite{AdkinsNappiWitten1983static}.  However, there are two ways in which the calculation for gauged skyrmions would differ from \cite{AdkinsNappiWitten1983static}.  The first is that in the case of gauged skyrmions, both the vacuum and the soliton possess less symmetry than in the ungauged case.  In the gauged case the boundary condition breaks the group $\SU(2)_J\times \SU(2)_I$ of rotations and isorotations down to $\SU(2)_J\times \U(1)_I$. But, as the soliton is invariant only under a $\U(1)$ subgroup, the configuration space on which to quantise is $(\SU(2)_J\times \U(1)_I)/\U(1)\cong\SU(2)$, as in the ungauged case.

The second difference for gauged skyrmions is that, in order to reproduce the Gell-Mann--Nishijima relations (and several other important relations), it is necessary to include a Wess--Zumino--Witten term in the Skyrme lagrangian. Unlike in the ungauged case, the Wess--Zumino--Witten term is non-vanishing in the gauged Skyrme model. In our holographic picture this would correspond to studying calorons in the presence of a Chern--Simons term. The effect of Chern--Simons terms on calorons have not been studied, but its effect on instantons on $\R^4$ has been \cite{CollieTong2008instantons}.  It would be interesting to see whether the methods of \cite{CollieTong2008instantons} can be applied to calorons, how they affect the dynamics of calorons (studied earlier in \cite{LeeYi2003}), and what the implications for gauged skyrmions would be.

\section{Concluding remarks}\label{sec:conclusion}

We have explored choices of coupling constants in the gauged Skyrme model that result in energies close to the topological energy bound -- in other words, with low BMI \eqref{BMI}.  We did this in two ways.  First, using a family of coupling constants based on an holographic construction, we obtained a 1-soliton with energy 10\% above the bound derived in section \ref{sec:energy-bounds}. Second, by optimising coupling constants to minimise the BMI of the KvBLL skyrmion, we obtained an energy that is just 7\% above the topological energy bound.  These results improve on previous results of 82\% in a gauged Skyrme model \cite{PietteTchrakian2000static} and 23\% in the standard ungauged Skyrme model. As a consequence, our model will have much lower classical binding energies than more conventional models.

Our results were obtained in an $\SU(2)$-gauged Skyrme model, which is physically unrealistic.  However, our choice of boundary condition broke this gauge symmetry to $\U(1)$, so our results give a good indication of what could be expected in the more realistic $\U(1)$-gauged Skyrme model.  The low values for BMI that we quoted above were obtained with a coupling which is much stronger than the real-life electromagnetic coupling.  However, our strong coupling results still give insight into how the coefficients of the gauged Skyrme model should be chosen at weak coupling.  An example of this is in section \ref{sec:dipole}, where we showed that our model gives a realistic value for the nucleon magnetic dipole moment at weak coupling.

The results reported here are (to the best of our knowledge) the first systematic exploration of the parameter space for gauged Skyrme models. Previous studies have used a restrictive set of parameters, for which most of the terms in the energy \eqref{Gauged-Skyrme-energy} are absent. We have shown that the additional terms in \eqref{Gauged-Skyrme-energy} allow for much lower BMIs, and hence lower binding energies.

\subsection*{Acknowledgements}
JC is grateful to Lancaster University, UK, for allowing extended access to MATLAB, through which many of the results of this work were calculated. TW was supported by the University of Leeds as an Academic Development Fellow throughout the duration of this project, and by the UK Engineering and Physical Sciences Research Council
through grant EP/P024688/1.  Some of the numerical simulations were run using the Soliton Solver library
developed by TW at the University of Leeds. This work was undertaken on ARC4, part of the High Performance Computing facilities at the University of Leeds, UK.

\appendix
\section{KvBLL calorons}\label{appendix:KvBLL}
In this appendix, we give a sufficient review of the KvBLL calorons of \cite{KraanVanBaal1998,LeeLu1998} and the Nahm transform for calorons \cite{Nyethesis,CharbonneauHurtubise2007nahm.tfm} for the purpose of clarifying parameter conventions used in the main body of this paper.
\subsection{The Nahm transform for calorons}
The Nahm data for an $\SU(2)$ caloron of period $\beta=\frac{2\pi}{\mu_0}$ is defined on the reciprocal circle $S^1=\bigslant{\R}{\mu_0\Z}$, covered by two intervals $I_1=[-\mu,\mu]$ and $I_2=[\mu,\mu_0-\mu]$ of lengths $\nu_1$ and $\nu_2$ respectively. The data consists of the main \textit{bulk data} called \textit{Nahm matrices}, and \textit{fundamental data}, determined by \textit{matching conditions} on the boundaries of the intervals, which take varying forms dependent on the values of the monopole charges $m_1$ and $m_2$. For our purposes, we are only interested in the case where $m_1=m_2=k$, and thus we shall not describe the other scenarios. For this, we refer the reader to \cite{Nyethesis,CharbonneauHurtubise2007nahm.tfm}.
\paragraph{Bulk data.} The \textit{Nahm matrices} are anti-hermitian matrix-valued functions
\begin{align}
    T_p^\alpha:I_p\lto \mathfrak{u}(k),\quad p=1,2,\quad \alpha=0,1,2,3,
\end{align}
analytic on the interior of $I_p$. These are to satisfy \textit{Nahm's equations}
\begin{align}\label{Nahmseqn}
       \dfrac{\d T_p^j}{\d s}+[T_p^0,T_p^j]+\dfrac{1}{2}\epsilon_{jkl}[T_p^k,T_p^l]=0,\quad\text{for }j=1,2,3.
\end{align}
One can always use a periodic gauge transformation (that is functions $g_p:I_p\lto U(k)$ with $g_1(\mu)=g_2(\mu)$ and $g_1(-\mu)=g_2(\mu_0-\mu)$) to fix $T_p^0$ as a constant.
\paragraph{Matching data.} At the boundaries of the intervals, the Nahm matrices are to satisfy the matching conditions
    \begin{align}\label{(k,k)-match}
    \begin{array}{c}
        \AA_1(\zeta,\mu)-\AA_2(\zeta,\mu)=\left(u_+-w_+\zeta\right)\left(w_+^\dagger+u_+^\dagger\zeta\right),\\
        \AA_2(\zeta,\mu_0-\mu)-\AA_1(\zeta,-\mu)=\left(u_--w_-\zeta\right)\left(w_-^\dagger+u_-^\dagger\zeta\right),\end{array}
    \end{align}
    where
    \begin{align}
       \AA_p(s,\zeta)=T_p^2(s)+{\rm i} T_p^3(s)+2{\rm i} T_p^1(s)\zeta+(T_p^2(s)-{\rm i} T_p^3(s))\zeta^2,\quad s\in I_p,\:\zeta\in\C.\label{spectral-fct-A}
    \end{align}
and  $(u_\pm,w_\pm)\in(\C^k)^2\setminus\{(0,0)\}$ are the \textit{matching data}.
\paragraph{The Nahm transform.} To obtain a caloron from its associated Nahm data, we first define the following shorthands. For the matrices $T_p^\alpha\in\mathfrak{u}(k)$, $\alpha=0,1,2,3$, and a point $(t,\vec{X})\equiv(t,X_1,X_2,X_3)\in S^1\times\R^3$, let
\begin{align}
    T_p=T_p^0\otimes\mathbb{1}_2-{\rm i}\sum_{j=1}^3T_p^j\otimes\sigma^j,\quad\text{and}\quad X_p=-{\rm i} t\mathbb{1}_{k}\otimes\mathbb{1}_2-\sum_{j=1}^3X_j\mathbb{1}_{k}\otimes\sigma^j.
\end{align}
These define a Dirac operator on each interval $\slashed{D}_p:C^\infty(I_p,\C^{k}\otimes\C^2)\lto C^\infty(I_p,\C^{k}\otimes\C^2)$, for which, we require functions $\psi_p^q:I_p\lto\C^{k}\otimes\C^2$ in $\ker\slashed{D}$, that is, satisfying
\begin{align}\label{tfm-eqn-m=0}
    \slashed{D}_p(\psi_p^q)=-\left(\dfrac{\d}{\d s}+T_p+X_p\right)\psi_p^q=0,
\end{align}
for $p,q=1,2$. These functions are additionally expected to satisfy matching conditions at $\pm\mu$; letting $v_\pm\in\C^k\otimes\C^2$ be defined by the vectors $(u_\pm,w_\pm)$ as
    \begin{align}\label{v-pm}
        v_\pm=\dfrac{1}{\sqrt{2}}\begin{pmatrix}w_\pm-u_\pm\\
        w_\pm+u_\pm\end{pmatrix},
    \end{align}
we require a pair $\vec{\zeta}^q=(\zeta_+^q,\zeta_-^q)$ for $q=1,2$ satisfying
\begin{align}\label{match-tfm}
    \psi_2^q(\mu)-\psi_1^q(\mu)=v_+\zeta_+^q,\quad \psi_1^q(-\mu)-\psi_2^q(\mu_0-\mu)=v_-\zeta_-^q. 
\end{align}
The functions $\psi_p$ and their corresponding matching data form a local frame $\{e^j\}$ for the fibre over $(t,\vec{X})$ of a rank $2$ vector bundle $V\to S^1\times\R^3$. Each fibre is endowed with a natural hermitian inner product, on which we additionally require the frame to be orthonormal. This is understood via the condition
\begin{align}\label{Normal-(k,k)}
    \ol{\zeta_+^a}\zeta_+^b+\ol{\zeta_-^a}\zeta_-^b+\sum_{p=1}^2\int_{I_p}{\psi_p^a}(s)^\dagger{\psi_p^b}(s)\;\d s=\delta^{ab}.
\end{align}
From this, the caloron is defined as the connection induced from the trivial connection on the infinite rank bundle $S^1\times\R^3\times C^\infty(S^1_{\mu_0},\C^k\otimes\C^2)$, that is the connection whose matrix components in $\su(2)$ are the $1$-forms
\begin{align}
    \wt{A}^{ij}=\langle e^i,\d e^j\rangle.
\end{align}
This may be understood more explicitly in terms of the data $(\psi_p,\zeta_p)$ as follows. By forming the matrices
\begin{align}\label{matrix-version-tfm}
    S_\pm=\begin{pmatrix}
    \zeta_\pm^1&\zeta_\pm^2\end{pmatrix},\quad\psi_p(s)=\begin{pmatrix}
    \psi_p^1(s)&\psi_p^2(s)\end{pmatrix},
\end{align}
the connection matrix $\wt{A}_\mu:S^1\times\R^3\lto\su(2)$ may thus be conceived as
\begin{align}\label{GF-nahm}
    \wt{A}_\mu=S_+^\dagger\bdy_\mu S_++S_-^\dagger\bdy_\mu S_-+\sum_{p=1}^2\int_{I_p}\psi_p(s)^\dagger\bdy_\mu\psi_p(s)\;\d s.
\end{align}
\subsection{Charge \texorpdfstring{$(1,1)$}{(1,1)} Nahm data}
In the case $k=1$, Nahm's equation \eqref{Nahmseqn} implies that the Nahm matrices are constant. The isometries of $S^1\times\R^3$ act in the obvious way on the tuples $(T_p^0,T_p^j)$, and they can be exploited to fix the Nahm matrices as
\begin{align}\label{KvBLL-Nahm-matrices}
    T_1^\alpha={\rm i}\dfrac{\lambda^2}{4}\delta^{\alpha 3},\quad T_2^\alpha=-{\rm i}\dfrac{\lambda^2}{4}\delta^{\alpha 3},
\end{align}
for $\alpha=0,1,2,3$. Up to gauge transformations and a relative phase, the corresponding matching data are given by
\begin{align}\label{KvBLL-matching}
    u_+=-{\rm i}\frac{\lambda}{\sqrt{2}},\quad w_+={\rm i}\frac{\lambda}{\sqrt{2}},\quad u_-={\rm i}\frac{\lambda}{\sqrt{2}},\quad w_-={\rm i}\frac{\lambda}{\sqrt{2}},
\end{align}
where $\lambda>0$. This parameterisation defines the most symmetric examples of KvBLL calorons, with axial symmetry around the $X_3$-axis in $\R^3$, and all other $(1,1)$ calorons may be obtained from this via isometries of the caloron moduli space. For a general set of Nahm matrices, the gauge-invariant quantities given by ${\rm i}\tr(T_p^\alpha)/k$ are interpreted as the \textit{locations} of the constituent monopoles in $S^1\times\R^3$. For the data \eqref{KvBLL-Nahm-matrices}, the locations are thus $(0,0,0,\pm\lambda^2/4)$. When the masses $\nu_1$ and $\nu_2$ are equal, there are additional symmetries that can be understood by switching the constituent monopoles. Explicitly, the data is invariant up to gauge transformations under the isometries $(t,X_1,X_2,X_3)\mapsto(-t,X_1,X_2,-X_3)$ and $(t,X_1,X_2,X_3)\mapsto(t,-X_1,X_2,-X_3)$, and the switching of the constituent monopoles is realised by combining these isometries with a large gauge transformation called \textit{the rotation map} \cite{cork2018symmrot}, which acts on the Nahm data by rotating the circle on which it is defined (which here amounts to $T_1\leftrightarrow T_2$), along with a $\U(1)$ phase, which acts on the matching data only as the action $(u_\pm,w_\pm)\mapsto e^{\pm{\rm i}\theta}(u_\pm,w_\pm)$.
\subsection{Applying the Nahm transform}
In this section we shall reproduce the KvBLL calorons from their Nahm data, that is, we shall apply the Nahm transform explicitly. This was also reviewed recently in \cite{KatoNakamulaTakesue2018magnetically} using a different, but equivalent formulation of the Nahm transform. Our calculation is most similar to that of \cite{LeeLu1998}. Starting with the matching data \eqref{KvBLL-matching}, we may form the vectors $v_\pm$ as in \eqref{v-pm}, which due to our choice of parameterisation, take the simple form
\begin{align}
    v_+={\rm i}\begin{pmatrix}
    \lambda\\
    0\end{pmatrix},\quad v_-={\rm i}\begin{pmatrix}0\\\lambda\end{pmatrix}.
\end{align}
From this, and the matrices \eqref{KvBLL-Nahm-matrices}, the Dirac equations \eqref{tfm-eqn-m=0} reduce to
\begin{align}
    \left(-\dfrac{\d}{\d s}+{\rm i} t+(\vec{X}-\frac{\lambda^2}{4}\vec{e}_3)\cdot\vec{\sigma}\right)\psi_1^q&=0,\label{Nahm-tfm-k=1_I_1}\\
    \left(-\dfrac{\d}{\d s}+{\rm i} t+(\vec{X}+\frac{\lambda^2}{2}\vec{e}_3)\cdot\vec{\sigma}\right)\psi_2^q&=0,\label{Nahm-tfm-k=1_I_2}
\end{align}
where $\psi_p^q:I_p\lto\C^2$. We may straightforwardly solve (\ref{Nahm-tfm-k=1_I_1})-(\ref{Nahm-tfm-k=1_I_2}) to give
\begin{align}
    \psi_1^q(s)&=\varphi_1(s)V_1^q=e^{{\rm i} ts}\sqrt{N_1}\exp\left(\vec{y}_1\cdot\vec{\sigma}s\right)V_1^q,\label{Soln-Nahm-tfm-k=1_I_1}\\
    \psi_2^q(s)&=\varphi_2(s)V_2^q=e^{{\rm i} t(s-\frac{\mu_0}{2})}\sqrt{N_2}\exp\left(\vec{y}_2\cdot\vec{\sigma}\left(s-\dfrac{\mu_0}{2}\right)\right)V_2^q,\label{Soln-Nahm-tfm-k=1_I_2}
\end{align}
where $\vec{y}_1=\vec{X}-\lambda^2/4\vec{e}_3$, $\vec{y}_2=\vec{X}+\lambda^2/4\vec{e}_3$, $V_p^q\in C^\infty(S^1\times\R^3,\C^2)$, and $N_p$ denote the normalisation factors
\begin{align}
    N_1=\dfrac{r_1}{\sinh(\nu_1 r_1)},\quad N_2=\dfrac{r_2}{\sinh(\nu_2r_2)},
\end{align}
with $r_1=|\vec{y}_1|$, $r_2=|\vec{y}_2|$, and $\nu_1=2\mu$ and $\nu_2=\mu_0-2\mu$ denoting the constituent monopole masses. These coordinate changes and additional factors have been chosen without loss of generality so that the orthonormalisation condition \eqref{Normal-(k,k)} is exactly given by
\begin{align}\label{Normalisation-condition-simplified}
    S_+^\dagger S_++S_-^\dagger S_-+V_1^\dagger V_1+V_2^\dagger V_2=\mathbb{1},
\end{align}
where $S_\pm$ is as in \eqref{matrix-version-tfm}, and $V_p=\begin{pmatrix}V_p^1&V_p^2\end{pmatrix}$. This normalisation also means we may calculate from (\ref{GF-nahm}) the caloron gauge field as
\begin{align}\label{KvBLL-cal-GF}
    \wt{A}_\mu=S_+^\dagger\bdy_\mu S_++S_-^\dagger\bdy_\mu S_-+V_1^\dagger\left(\bdy_\mu+\MM_\mu(\nu_1,\vec{y}_1)\right)V_1+V_2^\dagger\left(\bdy_\mu+\MM_\mu(\nu_2,\vec{y}_2)\right)V_2,
\end{align}
where $\MM=\MM_t\;\d t+\vec{\MM}\cdot \d\vec{X}$ is found in each term by integrating $\varphi_p^\dagger\bdy_\mu\varphi_p$ over $I_p$. This turns out to be the BPS monopole with components
\begin{align}
   \MM_t(\nu,\vec{z})&={\rm i}\left(\frac{\nu}{2}\coth(\nu R)-\dfrac{1}{2R}\right)\dfrac{\vec{z}\cdot\vec{\sigma}}{R},\label{higgs-field}\\
   \vec{\MM}(\nu,\vec{z})&=\dfrac{{\rm i}}{2}\left(\dfrac{\nu R}{\sinh(\nu R)}-1\right)\dfrac{\vec{z}\times\vec{\sigma}}{R^2},
\end{align}
where $R=|\vec{z}|$. It remains to calculate $S_\pm$ and $V_p$ from \eqref{match-tfm} and \eqref{Normalisation-condition-simplified}. The matching conditions \eqref{match-tfm} read
\begin{align}
    \varphi_2(\mu)V_2-\varphi_1(\mu)V_1&={\rm i}\lambda P_+ S,\\
    \varphi_1(-\mu)V_1-\varphi_2(\mu_0-\mu)V_2&={\rm i}\lambda P_-S,
\end{align}
where
\begin{align}
    S=\begin{pmatrix}S_+\\
    S_-\end{pmatrix},\quad P_\pm=\dfrac{1}{2}(\mathbb{1}\pm\sigma^3).
\end{align}
These are solved uniquely by
\begin{align}
    V_1&=\dfrac{{\rm i}\lambda}{\sqrt{N_1}}\dfrac{\BB_1^\dagger}{\BB}\left(e^{{\rm i}\frac{\nu_2}{2}t}\exp\left(\frac{\nu_2}{2}\vec{y}_2\cdot\vec{\sigma}\right)P_++e^{-{\rm i}\frac{\nu_2}{2}t}\exp\left(-\frac{\nu_2}{2}\vec{y}_2\cdot\vec{\sigma}\right)P_-\right)S,\label{V_1-explicit}\\
    V_2&=\dfrac{{\rm i}\lambda}{\sqrt{N_2}}\dfrac{\BB_2^\dagger}{\BB}\left(e^{-{\rm i}\frac{\nu_1}{2}t}\exp\left(-\frac{\nu_1}{2}\vec{y}_1\cdot\vec{\sigma}\right)P_++e^{{\rm i}\frac{\nu_1}{2}t}\exp\left(\frac{\nu_1}{2}\vec{y}_1\cdot\vec{\sigma}\right)P_-\right)S,\label{V_2-explicit}
\end{align}
where
\begin{align}
    \BB_1&=e^{-{\rm i}\frac{\mu_0}{2}t}\exp\left(-\dfrac{\nu_2}{2}\vec{y}_2\cdot\vec{\sigma}\right)\exp\left(-\dfrac{\nu_1}{2}\vec{y}_1\cdot\vec{\sigma}\right)-e^{{\rm i}\frac{\mu_0}{2}t}\exp\left(\dfrac{\nu_2}{2}\vec{y}_2\cdot\vec{\sigma}\right)\exp\left(\dfrac{\nu_1}{2}\vec{y}_1\cdot\vec{\sigma}\right),\label{B1}\\
    \BB_2&=e^{-{\rm i}\frac{\mu_0}{2}t}\exp\left(-\dfrac{\nu_1}{2}\vec{y}_1\cdot\vec{\sigma}\right)\exp\left(-\dfrac{\nu_2}{2}\vec{y}_2\cdot\vec{\sigma}\right)-e^{{\rm i}\frac{\mu_0}{2}t}\exp\left(\dfrac{\nu_1}{2}\vec{y}_1\cdot\vec{\sigma}\right)\exp\left(\dfrac{\nu_2}{2}\vec{y}_2\cdot\vec{\sigma}\right),\label{B2}
\end{align}
and\footnote{The choice of index 1 or 2 is arbitrary since they are symmetric.} $\BB=\tr(\BB_1\BB_1^\dagger)/2=\tr(\BB_2\BB_2^\dagger)/2$. It is straightforward to calculate
\begin{align}\label{B-tr}
    \BB=2\left(\cosh(\nu_1 r_1)\cosh(\nu_2r_2)+\dfrac{\vec{y}_1\cdot\vec{y}_2}{r_1r_2}\sinh(\nu_1r_1)\sinh(\nu_2r_2)-\cos(\mu_0t)\right).
\end{align}
Equations \eqref{V_1-explicit}-\eqref{V_2-explicit} show that $V_p$ may be fully determined once we know the matrix $S$. This may be found by settling the normalisation condition (\ref{Normalisation-condition-simplified}). We find
\begin{align}
    V_1^\dagger V_1
    &=\dfrac{1}{N_1}\dfrac{\lambda^2}{\BB}\left(\cosh(\nu_2r_2)+\sinh(\nu_2r_2)\dfrac{y_2^3}{r_2}\right)S^\dagger S,\label{V1V1}\\
    V_2^\dagger V_2
    &=\dfrac{1}{N_2}\dfrac{\lambda^2}{\BB}\left(\cosh(\nu_1 r_1)-\sinh(\nu_1 r_1)\dfrac{y_1^3}{r_1}\right)S^\dagger S,\label{V2V2}
\end{align}
so that (\ref{Normalisation-condition-simplified}) reads
\begin{align}\label{S-condition}
    S^\dagger S=\dfrac{1}{\NN}\mathbb{1},
\end{align}
where
\begin{align*}
    \NN=1+\frac{\lambda^2}{\BB}\left(\dfrac{1}{N_1}\left(\cosh(\nu_2r_2)+\sinh(\nu_2r_2)\dfrac{y_2^3}{r_2}\right)+\dfrac{1}{N_2}\left(\cosh(\nu_1 r_1)-\sinh(\nu_1 r_1)\dfrac{y_1^3}{r_1}\right)\right).
\end{align*}
Equation (\ref{S-condition}) says we may fix $S$ to be of the form $S=\UU/{\sqrt{\NN}}$, where $\UU:S^1\times\R^3\lto \U(2)$.  Note that from the formula \eqref{GF-nahm} and the normalisation condition \eqref{Normalisation-condition-simplified}, we can see that the action of a gauge transformation
\begin{align}
    \wt{A}_\mu\mapsto g^{-1}\wt{A}_\mu g+g^{-1}\bdy_\mu g
\end{align}
for $g:S^1\times\R^3\lto\U(2)$ is equivalent to the action
\begin{align}
    S\mapsto Sg.
\end{align}
Thus, $\UU$ is determined uniquely up to right-multiplication by gauge transformations. If we fix the gauge $\UU=\exp\left({\rm i}\mu t\sigma^3\right)$, then the boundary condition \eqref{caloron-bdy} is upheld.
\subsection{The tail of the KvBLL caloron}
Using the formula \eqref{KvBLL-cal-GF}, it is possible to calculate the components $\wt{A}_\mu$ of the KvBLL caloron explicitly, and this is most straightforwardly done in the equal mass case $\nu_1=\nu_2$. Using this, we can easily consider the asymptotic value of the connection. In particular, in the region where $r_1,r_2\gg \beta$, that is, at a scale much larger than the period, the component $\wt{A}_t$ is approximately abelian, and is given by the asymptotic value
\begin{align}\label{tail-KvBLL-simple}
    \wt{A}_t={\rm i}\left(\dfrac{\mu_0}{4}+\dfrac{1}{2r_2}-\dfrac{1}{2r_1}\right)\sigma^3+O(e^{-\mu_0r_p/2}).
\end{align}
Remarkably, this formula can be obtained in a more straightforward manner, simply from the Nahm data. It is well-known that the curves defined by $\CC_p(\eta,\zeta)=\det\left(\eta+\AA_p(\zeta)\right)=0$, where $\AA_p$ are as in \eqref{spectral-fct-A}, are independent of $s\in I_p$. These are known as the \textit{spectral curves} of the caloron. Letting $(X_1,X_2,X_3)\in\R^3$, and for $\zeta\in\C$, define $x(\zeta)=-(X_3-{\rm i} X_2)-2X_1\zeta+(X_3+{\rm i} X_2)\zeta^2$. From this, and the curves $\CC_p$, we define the objects
\begin{align}\label{tail-cal}
    \mathfrak{V}_p=\dfrac{{\rm i}}{2\pi}\oint_{\gamma}\left.\dfrac{\bdy_\eta \CC_p}{\CC_p}\right|_{\eta=x(\zeta)}\;\d\zeta,
\end{align}
where $\gamma$ is a contour containing only one half of the poles of the integrand. These are known as the \textit{tails}  of the caloron's constituent monopoles, in analog to the tail of a BPS monopole \cite{HarlandNogradi2016charge,hurtubise1985asymptotic}. In the case of BPS monopoles there is only one tail $\mathfrak{V}$, and Hurtubise's theorem \cite{hurtubise1985asymptotic} says that asymptotically, the Higgs field satisfies
\begin{align}\label{tail-monopole-higgs}
    |\Phi|\sim \nu+\mathfrak{V},
\end{align}
where $\nu$ is the monopole mass. It is currently unknown whether a similar result holds for calorons, and therefore it is unknown how best to define the tails \eqref{tail-cal}.

It is straightforward enough to compute from \eqref{KvBLL-Nahm-matrices} that the spectral curves for our parameterisation of the KvBLL calorons are given by the functions
\begin{align}
    \CC_{1}(\eta,\zeta)=\eta-\dfrac{\lambda^2}{4}\left(\zeta^2-1\right),\quad\CC_{2}(\eta,\zeta)=\eta+\dfrac{\lambda^2}{4}\left(\zeta^2-1\right).
\end{align}
We thus have
\begin{align}
    \GG_1:=\left.\dfrac{\bdy_\eta \CC_1}{\CC_1}\right|_{\eta=x(\zeta)}(\zeta)&=\left(x(\zeta)-\frac{\lambda^2}{4}(\zeta^2-1)\right)^{-1},\\
    \GG_2:=\left.\dfrac{\bdy_\eta \CC_2}{\CC_2}\right|_{\eta=x(\zeta)}(\zeta)&=\left(x(\zeta)+\frac{\lambda^2}{4}(\zeta^2-1)\right)^{-1}.
\end{align}
Away from the rays $(X_1,X_2,X_3)=(a,0,\mp\lambda^2/4)$, $a\neq0$, these have simple poles at $\zeta_1^{\pm}$ and $\zeta_{2}^\pm$ respectively, where
\begin{align}
\begin{aligned}
    \zeta_1^\pm&=\dfrac{X_1\pm r_{1}}{X_3+{\rm i} X_2-\frac{\lambda^2}{4}},&
    \zeta_2^\pm&=\dfrac{X_1\pm r_{2}}{X_3+{\rm i} X_2+\frac{\lambda^2}{4}}.
\end{aligned}
\end{align}
We also have
\begin{align}
    \mathrm{Res}\left(\GG_p,\zeta_p^\pm\right)=\pm\dfrac{1}{2r_p}.
\end{align}
By construction, $\gamma$ only contains one of the poles of the integrand of \eqref{tail-cal} in each case. If it is $\zeta_p^+$ in each case, then we find that the tail \eqref{tail-KvBLL-simple} of the KvBLL caloron can be straightforwardly reproduced by the formula
\begin{align}
    \wt{A}_t={\rm i}\left(\dfrac{\mu_0}{4}+\mathfrak{V}_1-\mathfrak{V}_2\right)\sigma^3+\text{exponentially decaying terms},
\end{align}
with the order of the constituents $1$ and $2$ reversed if the opposite poles are inside $\gamma$. This sort of formula is the candidate for the analogue of Hurtubise's formula \eqref{tail-monopole-higgs} for the tail of a monopole, and was first considered and verified for other higher charge calorons in \cite{bruckmannvanBaalNog2004higher}. The opposite signs of the individual tails accounts for the fact that the monopole constituents have opposite charges. The usefulness of the asymptotic formulae for deriving magnetic moments of caloron generated gauged skyrmions presented in this paper therefore provides further motivation for resolving this important unsolved problem.
\newpage
\bibliographystyle{unsrt}
\bibliography{refs}
\end{document}